\theoremstyle{definition}
\newtheorem{dfn}{Definition}[section]
\newtheorem{lemma}[dfn]{Lemma}
\newtheorem{theorem}[dfn]{Theorem}
\title{Nested-sequent Calculus for Modal Logic MB}
\author{Tomoaki Kawano
\institute{Kanagawa University}
\email{kawano.t.af@m.titech.ac.jp}
}
\begin{document}
\maketitle

\begin{abstract}
Quantum logic ({\bf QL}) is a non-classical logic for analyzing the propositions of quantum physics.
Modal logic {\bf MB}, which is a logic that handles the value of the inner product that appears in quantum mechanics, was constructed with the development of {\bf QL}.
Although the basic properties of this logic have already been analyzed in a previous study, some essential parts still need to be completed. They are concerned with the completeness theorem and the decidability of the validity problem of this logic.
This study solves those problems by constructing a nested-sequent calculus for {\bf MB}. In addition, new logic {\bf MB+} with the addition of new modal symbols is discussed.

\end{abstract}

\section{Introduction}
Quantum logic ({\bf QL}) has developed from both quantum physics and mathematical logic aspects since \cite{birk}.
{\it Modular lattices} and {\it orthomodular lattices} have been analyzed as algebraic semantics of {\bf QL}. These lattices are based on a {\it Hilbert space}, which is the state space of a particle.
In quantum mechanics, the value of a physical quantity can only be predicted probabilistically.
The absolute value of the {\it inner product} of two states (two unit vectors in a Hilbert space) is intrinsically related to the probability distribution of the physical quantity.

As counterparts of orthomodular lattice, some {\it Kripke frames} (binary relation frames) have also been analyzed.
In the simplest Kripke frame of {\bf QL},
possible worlds represent states,
and the binary relation abstractly 
represents the {\it orthogonal relation} between states.
Intuitively, on this frame, we can only deal with the binary concept of whether a proposition is 100 \% true or not because the orthogonal relation expresses that the inner product between states is zero.
Although such logic has developed as an essential foundation for {\bf QL}, developing logic that can handle detailed probability values is also desirable. 
Because the absolute value of the inner product is independent of the order of the elements, the binary relation is constructed to satisfy {\it symmetry} in these frames.

 {\it Extended quantum logic} ({\bf EQL}) \cite{extendQL} has been developed to handle some properties of the absolute value of the inner product.
In \cite{extendQL}, two logics, {\bf EQL} and {\bf MB}, are constructed. 
The truth values of the formulas of {\bf EQL}
range over the unit interval $I=[0,1]$, which is related to the absolute value of the inner product. 
{\bf MB} (multi-modal extension of {\bf B}) is the modal logic counterpart of {\bf EQL}.
This relation could be regarded as the well-known McKinsey–Tarski translation.
In {\bf MB}, the truth value is binary, but the concept of the inner product can be expressed using a modal symbol containing numerical values.
This study focuses on {\bf MB}.

Technically, as a relation between states,  we can also consider frames that introduce not the absolute value of the inner product but the inner product itself.
However, when analyzing the critical factor of probability, a frame that introduces the inner product itself becomes somewhat unnecessarily complex.
Therefore, the study of {\bf MB} deals with frames that introduce only absolute values \cite{extendQL}.
Other studies have introduced the {\it transitions} between two states in Hilbert space as a binary relation of the frame.
For example, the frame of {\it dynamic quantum logic} introduces the concepts of {\it unitary transformations} and {\it projections} \cite{bat1}. 
Each of these has its logical characteristics and has been studied separately. 


Although the basic concept of {\bf MB} has already been analyzed in \cite{extendQL},
there is room for analysis of the following concepts:

\begin{enumerate}

\item In \cite{extendQL}, only the Hilbert-style deduction system has been analyzed.

\item There is a mistake in the proof of the completeness theorem in \cite{extendQL} originating from symmetry frames.
Furthermore, in \cite{extendQL}, the proof of decidability of the validity problem of {\bf MB} is based on the finite model property, which is related to the proof of the completeness theorem. Therefore, 
it is important to reestablish decidability.

Here, an overview of the error is provided.
In proving the completeness theorem for a Hilbert-style deduction system
for modal logic with symmetry frames, the following problem arises.
To construct a finite canonical model for modal logic from an unprovable formula $A$, a set $\GA_A$ consisting of all subformulas of $A$ (and all their negative forms in some cases) is usually constructed.
In a canonical model, {\it consistent} subsets of $\GA_A$ are defined as possible worlds.
The binary relation $R$ of a canonical model is defined as follows:
$(\GA',\GA'') \in R$ if for all $\square B \in \GA'$, $B \in \GA''$.
To show symmetry, we must prove that $(\GA'',\GA') \in R$ also holds on this definition.
The following types of methods are generally used to prove this relation.
Suppose $\square B \in \GA''$. From $(\GA',\GA'') \in R$, $\square \neg \square B \notin \GA'$. 
Because $\neg B \to \square \neg \square B$ is provable, $\neg B \notin \GA'$.
Therefore, $B \in \GA'$.  
However, this proof fails as follows. Even if $\square B \in \GA_A$, there is no guarantee of $\square \neg \square B \in \GA_A$ because $\square \neg \square B$ is not a subformula of $\square B$.
This mistake is on page 562, line 12 of \cite{extendQL}.
This method works if an infinite set of all formulas, not just subformulas of $A$, is adopted as $\GA_A$. 
 (If completeness is all needed, we can change to this infinite model and use the method described in \cite{extendQL} to prove it.)
However, that method would make the canonical model infinite, and we could not prove the decidability.

\item {\bf MB} has only the modal comparison symbols.
Leaving room for analysis of the modal symbols corresponding to each number. (Details are provided in Section \ref{MLMB+}.)
\end{enumerate}

To solve these problems, in this study, 
{\it nested-sequent calculus} for {\bf MB} that satisfies the {\it cut-e\-li\-mi\-na\-tion theorem} is constructed, and the cut-free completeness theorem is proved.
The decidability of the validity problem of {\bf MB} is shown by using this new calculus.
In addition, a nested-sequent calculus for new logic {\bf MB+} ({\bf MB} with new modal symbols) is also constructed.

The concept of nested-sequent were introduced independently in \cite{Brünnler2006} \cite{Bull1992} \cite{kashima1994}  \cite{Poggiolesi2009}.
For logic that satisfies specific properties, using ordinary sequent may be inconvenient.
It is well known that in logics involving symmetry frames as semantics (e.g., {\bf S5} and {\bf B}), it is complex to construct the usual sequent calculus that satisfies the cut-elimination theorem.
Various developed sequent systems have been proposed to overcome this problem, including nested-sequent (also known as {\it tree-hypersequent}) and others such as {\it hypersequent}, and {\it labelled sequent}.
These developed sequents are structures constructed by combining multiple sequents.
In many cases, These developmental sequents contain semantic elements.
Intuitively, each sequent in nested-sequent or labelled-sequent corresponds to each possible world of a Kripke frame.
The nested-sequent have a tree-like structure with the sequents as nodes, which intuitively corresponds to the tree-like part of the Kripke frame.
One of the characteristics of tree-like sequents is that it is easy to translate the entire tree-like structure into a single formula by translating sequents into formulas, starting from the leaf sequents in turn.
A labelled-sequent uses specific labels to represent each possible world in the Kripke frame.
In these developed sequent calculi, when constructing a canonical model, transforming just one sequent ensures that the canonical model does not become an infinite model while preserving conditions such as symmetry.
In this study, we employ a nested-sequent, which exhibits relatively manageable properties among these candidates.
Studies about these developed sequents are discussed, for example, in \cite{AvronHypersequent} \cite{Gabbaylabel} \cite{negi2} \cite{negi} \cite{Poggiolesi2009} \cite{Poggiolesi2010}.
A comparison and summary of these developed sequents are discussed in \cite{Poggiolesi2023}.

In this study, we adopt a development of the usual nested-sequent.
In the nested-sequent of standard modal logic, brackets [\ ] represent modal concepts of $\square$. 
In other words, intuitively, [\ ] expresses the difference between possible worlds.
This part needs to be developed in nested-sequents for logics that use more complex notions of modality.
Because {\bf MB} includes the modal symbol $\square^{d}_{\alpha}$ to concretely express the number $\alpha$ of the absolute value of the inner product, in this study, we use the bracket [\ ]$^{d}_{\alpha}$. 
Except for this difference, almost the same concept as the standard nested-sequent is employed.

In section \ref{EQLaMB}, the basics of {\bf MB} are reviewed.
In section \ref{TreeB}, the basics of nested-sequent for {\bf MB} are defined.  
In section \ref{sNSMB}, a nested-sequent calculus for {\bf MB} is defined, and some theorems are established.
In section \ref{MLMB+}, a nested-sequent calculus for {\bf MB+} is 
discussed.

Because this study is entirely the result of mathematical logic, a more detailed explanation of the quantum mechanical background of {\bf MB} is omitted.
For such an explanation, see \cite{extendQL}.
For more detailed explanations of the quantum mechanical background of {\bf QL}, see \cite{bat1} \cite{chiara1} \cite{HANDQ} \cite{Chihilbert} \cite{chi:giu}.
For more details about recent studies of sequent calculi and developed sequent systems for {\bf QL}, see, for example, \cite{Fazio2023} \cite{TKlosi} \cite{TKsosi} \cite{Kornell2023} \cite{nis1}.

\section{Modal logic MB}\label{EQLaMB}
This section reviews {\bf MB} defined in \cite{extendQL}.
The language of {\bf MB} consists of the following vocabulary:

\begin{description}
\item propositional variables: $p, q, \ldots$
\item propositional
constants: $\top, \bot$
\item logical connectives: $\neg, \wedge, \square^{c}_{\alpha}, \square^{o}_{\alpha} \ (\alpha \in J)$ 
\end{description}

\noi
where $J$ is a finite subset of the unit interval $I = [0,1]$ that includes $0$ and $1$.
As in \cite{extendQL}, in this study, 
we assume that $J$ is fixed to one particular set.
$c$ stands for ``closed'', and $o$ stands for ``open''.
These meanings can be seen in the definition of the valuation of formulas in a frame, which will be discussed later.

The formulas of {\bf MB} are defined as follows:

\begin{description}
\item  $A :: =  p \ | \ \top \ | \ \bot \ | \ \neg A \ |\  A \AND A \ |\ \square^{c}_{\alpha} A \ |\ \square^{o}_{\alpha} A \ \ \ (\alpha \in J)$
\end{description}

Formulas are denoted $A, B, \ldots$,
and finite sets of formulas are denoted $\GA,\DE,\SI,\ldots$.
Elements of $\{c, o\}$ are denoted $d, d',\ldots$.
We use the following abbreviations.
$A \vee B = \neg (\neg A \wedge \neg B)$,
$A \to B = \neg A \vee B$,
$\Diamond^{c}_{\alpha} A = \neg \square^{c}_{\alpha} \neg A$,
$\Diamond^{o}_{\alpha} A = \neg \square^{o}_{\alpha} \neg A$.

An {\it EQL-frame} $(S, R)$ is defined as follows:

\begin{description}
\item $S$: a non-empty set, an element referred to as a possible
world (or physically, a pure quantum state).

\item $R$: an $I$-valued accessibility relation on $S$, i.e., $R$ : $S \times S \to I$,
satisfying the following conditions: $R(s, t) = 1$ iff $s = t$ (reflexivity),
$R(s, t) = R(t, s) (\forall s, t \in S)$ (symmetry). 
(This $R$ represents the absolute value of the inner product between states.)
\end{description}

We write $s(\alpha)t$ for $R(s,t) = \alpha$.

An {\it MB-realization} is a structure $M = (S, R, P, V)$, where
\begin{description}
\item $(S,R)$ is an EQL-frame.
\item $P$ is a set of subsets of $S$, including $S$ and $\emptyset$, being closed
under set-theoretic finite intersection, set-theoretic complement
relative to $S$, and the two series of operations $\square^{c}_{\alpha}$, $\square^{o}_{\alpha}$ on a set for each $\alpha \in J$
that are defined as follows:
\begin{description}
\item $\square^{c}_{\alpha} S' \stackrel{\mathrm{def}}{=} \{ s \in S | \forall t \in S \ (\alpha \leqq R(s, t)$ implies $t \in S'$)\}.
\item $\square^{o}_{\alpha} S' \stackrel{\mathrm{def}}{=} \{ s \in S | \forall t \in S \ (\alpha < R(s, t)$ implies $t \in S'$)\}.
\end{description}

(Although the modal symbols used here as operations on sets are the same as those in the language of {\bf MB}, these are defined independently of the language of {\bf MB}. This concept is introduced to ensure that when dealing with $V$, the sets of possible worlds are closed in $P$ in the operation of logical connective $\square^{d}_{\alpha}$ \cite{extendQL}.)
\item Valuation $V$ is a map from propositional variables to $P$.
\end{description}

$V$ is extended inductively as follows: 
\begin{description}
 \item $V( \top ) = S$,
 \item $V( \bot ) = \emptyset$,
\item $V(A \wedge B) = V(A) \cap V(B)$,
\item $V( \neg A ) = V(A)^{c}$,
\item $V( \square^{c}_{\alpha} A) = \{ s \in S |$ for all $t \in S$, if $\alpha \leqq R (s, t)$, then $t \in V(A) $ $\}$, 
\item $V( \square^{o}_{\alpha} A) = \{ s \in S |$ for all $t \in S$, if $\alpha < R (s, t)$, then $t \in V(A) $ $\}$.
\end{description}

Formula $A$ is {\it true} at $s \in S$ if $s \in V(A)$ and we write $s \models A$.
$A$ is {\it valid in an {\bf MB}-realization $(S, R, P, V)$} if for all $s \in S$, $A$ is true at $s$.
$A$ is {\it valid in an EQL-frame $(S, R)$} if for all $P$ and $V$, $A$ is valid in $(S, R, P, V)$. 
$A$ is {\it valid} if $A$ is valid in all EQL-frames.

\section{Nested-sequent}\label{TreeB}
This section defines the basics of the nested-sequent for {\bf MB}.

A {\it sequent} is a structure $\GA \Rightarrow \DE$, where $\GA$ and $\DE$ are 
finite sets of formulas. 
A {\it nested-sequent} is defined inductively as follows:

\begin{enumerate}
\item A sequent is a nested-sequent (a tree with only a root).
\item $\GA \Rightarrow \DE, \mathcal{T}$ is a nested-sequent where $\GA \Rightarrow \DE$ is a sequent and $\mathcal{T}$ is a finite set of nested-sequents enclosed in each {\it modal brackets} $[ \ ]^{d}_{\alpha}$ where $d \in \{ c,o\}$ and $\alpha \in J-\{1\}$.
\end{enumerate}

For example, $p \wedge r, q \Rightarrow q, [ \Rightarrow p, [\square^{c}_{0.3} r \Rightarrow p \wedge q]^{o}_{0.7} ]^{c}_{0.5}, [ r \Rightarrow p, q]^{o}_{0.1} $ is a nested-sequent.
A nested-sequent can be considered a {\it tree} structure if the leftmost sequent is regarded as the root, each internal sequent is considered a {\it node}, and each modal bracket is regarded as an {\it edge} labelled with $(\alpha, d)$.

\begin{figure}[htbp]
 \begin{center}
  \includegraphics[width=70mm]{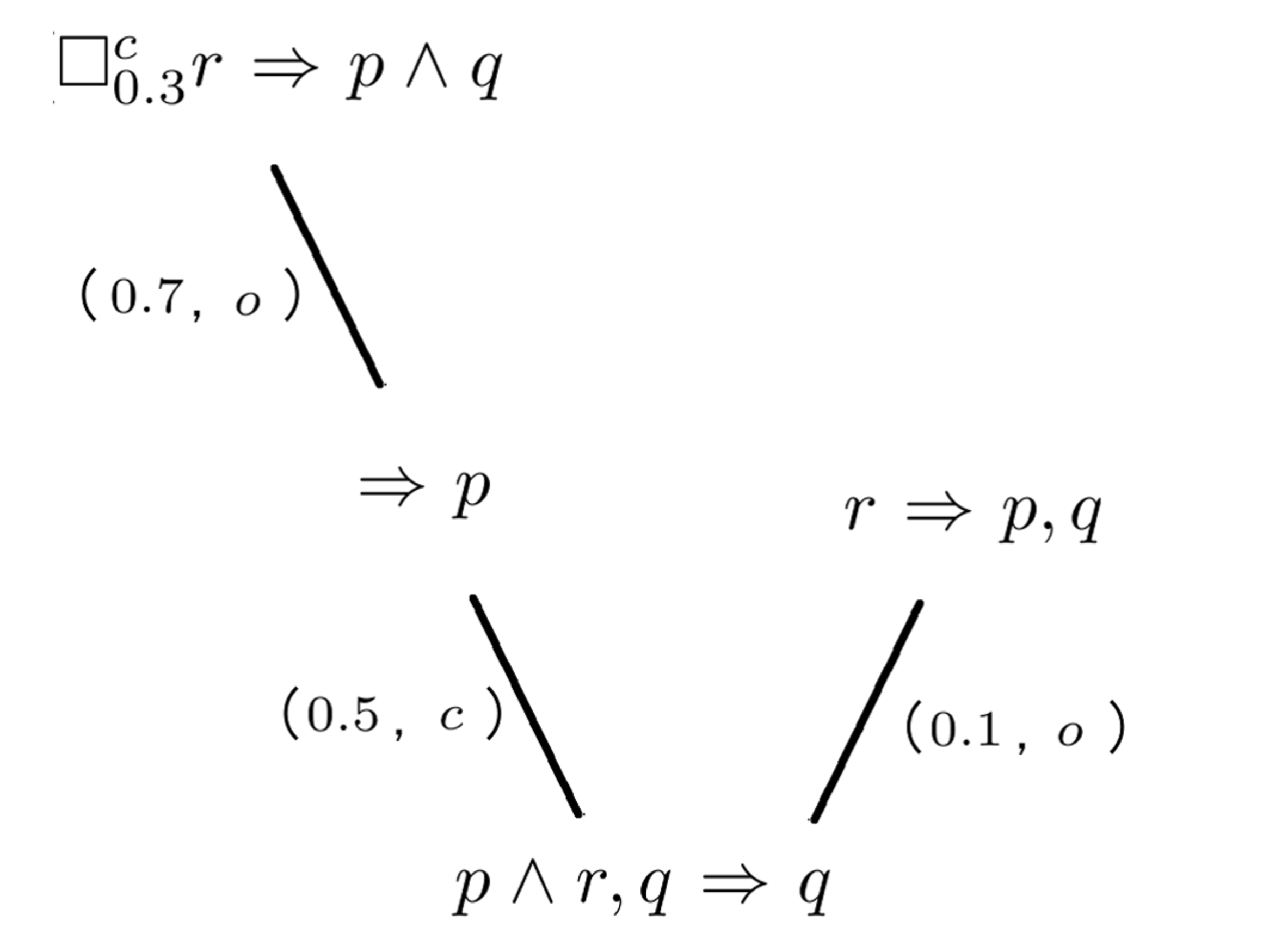}

 \label{fig:one}
Example: Tree representation of $p \wedge r, q \Rightarrow q, [ \Rightarrow p, [\square^{c}_{0.3} r \Rightarrow p \wedge q]^{o}_{0.7} ]^{c}_{0.5}, [ r \Rightarrow p, q]^{o}_{0.1} $.
\end{center}
\end{figure}

A number $\alpha$ {\it appears} in a nested-sequent $\GA \Rightarrow \DE, \mathcal{T}$ if $\square^{c}_{\alpha} A$ or $\square^{o}_{\alpha} A$ appear in it for some $A$, or some brackets $[\ ]^{d}_{\alpha}$ appear in it.
The set $(\GA \Rightarrow \DE, \mathcal{T})_N$ is defined as the set of all nodes of $\GA \Rightarrow \DE, \mathcal{T}$.
If the same sequent appears multiple times, they are treated as separate nodes.
For example, the first $p \Rightarrow q$ and the last $p \Rightarrow q$ in $p \Rightarrow q, [r \Rightarrow s, [p \Rightarrow q]^{o}_{0.7} ]^{c}_{0.5}$ are different nodes.
The ordered set $(\GA \Rightarrow \DE, \mathcal{T})_J$ is defined as the set of all $\alpha \in J$ that appear in $\GA \Rightarrow \DE, \mathcal{T}$ with $0$ and $1$. 
For example,
$(p \wedge r, q \Rightarrow q, [ \Rightarrow p, [\square^{c}_{0.2} r \Rightarrow p \wedge q]^{o}_{0.7} ]^{c}_{0.5}, [ r \Rightarrow p, q]^{o}_{0.1})_J = \{0, 0.1, 0.2, 0.5, 0.7, 1\}$.

We write $\| \GA \Rightarrow \DE, \mathcal{T} \|$ for the abbreviated nested-sequent in which $\GA \Rightarrow \DE, \mathcal{T}$ appears as a subtree.
This expression is used when focusing only on a specific part, $\GA \Rightarrow \DE, \mathcal{T}$, of a nested-sequent.
Note that even if $\GA \Rightarrow \DE, \mathcal{T}$ appears multiple times in a nested-sequent, when this notation is used, we are focusing on one particular subtree.
In a situation in which we focus on a specific $\GA \Rightarrow \DE, \mathcal{T}$ in a nested-sequent $\GA' \Rightarrow \DE', \mathcal{T'}$, we write  $\| \GA \Rightarrow \DE, \mathcal{T} \|$ =  $\GA' \Rightarrow \DE', \mathcal{T'}$.
After writing such an abbreviation, the discussion will proceed, assuming that the abbreviation is fixed.
For example, after writing $\| p \Rightarrow q \| =  p \Rightarrow q, [r \Rightarrow s]^{o}_{0.5}, [p \Rightarrow q]^{c}_{0.3}$ (and if it is determined from the context that $p \Rightarrow q$ refers to the first one), $\| p \Rightarrow q, r \| $ means $p \Rightarrow q, r, [r \Rightarrow s]^{o}_{0.5}, [p \Rightarrow q]^{c}_{0.3}$.

For convenience, in the following, we will equate the sequent $\GA \Rightarrow \DE$ with the nested-sequent $\GA \Rightarrow \DE, \emptyset$ that has the empty set of trees.
Therefore, if $\| \GA \Rightarrow \DE, \mathcal{T} \|$ is written, $ \GA \Rightarrow \DE$ may be a leaf of the tree.

The order $\prec$ on $I \times \{c,o\}$ is defined as follows:

\begin{description}

\item In case of $d = d'$ : \ $(\alpha, d) \prec (\beta, d')$ if $\alpha < \beta$.

\item In case of $d \neq d'$ : \ 
$(\alpha, c) \prec (\beta, o)$ if $\alpha \leqq \beta$.
$(\beta, o) \prec (\alpha, c)$ if $\alpha > \beta$.

\end{description}

Intuitively, this order represents the inverse of the inclusion relation of the upper closed subsets of $I$.
It is easy to see that this order is total.

We write $(\GA \Rightarrow \DE, \mathcal{T}) \triangleleft (\GA' \Rightarrow \DE', \mathcal{T'})$ if $\GA \Rightarrow \DE, \mathcal{T}$ is a subtree of $ \GA' \Rightarrow \DE', \mathcal{T'}$. In particular, if $\GA \Rightarrow \DE$ is a node of $\GA' \Rightarrow \DE', \mathcal{T'}$, we write $(\GA \Rightarrow \DE) \triangleleft (\GA' \Rightarrow \DE', \mathcal{T'})$.

An {\it embedding} of a nested-sequent $\GA \Rightarrow \DE, \mathcal{T}$ in an {\bf MB}-realization $(S,R,P,V)$ is a function $\mathcal{E}$ from $(\GA \Rightarrow \DE, \mathcal{T})_N$ to $S$ that satisfies the following conditions:

\begin{description}

\item If $(\GA_1 \Rightarrow \DE_1, [\GA_2 \Rightarrow \DE_2, \mathcal{T'}]^{c}_{\alpha}) \triangleleft (\GA \Rightarrow \DE, \mathcal{T})$ and $R((\mathcal{E}(\GA_1 \Rightarrow \DE_1),(\mathcal{E}(\GA_2 \Rightarrow \DE_2)) = \beta$, then $ \alpha \leqq \beta$.

\item If $(\GA_1 \Rightarrow \DE_1, [\GA_2 \Rightarrow \DE_2, \mathcal{T'}]^{o}_{\alpha}) \triangleleft (\GA \Rightarrow \DE, \mathcal{T})$ and $R((\mathcal{E}(\GA_1 \Rightarrow \DE_1),(\mathcal{E}(\GA_2 \Rightarrow \DE_2)) = \beta$, then $ \alpha < \beta$.

\end{description}

A nested-sequent $\GA \Rightarrow \DE, \mathcal{T}$ is {\it false} in an {\bf MB}-realization $(S,R,P,V)$ under $\mathcal{E}$ 
if for
all sequents $\GA' \Rightarrow \DE'$ in $\GA \Rightarrow \DE, \mathcal{T}$, all $A \in \GA'$ are true at $\mathcal{E}(\GA' \Rightarrow \DE')$ and all $A \in \DE'$ are false at $\mathcal{E}(\GA' \Rightarrow \DE')$.
A nested-sequent $\GA \Rightarrow \DE, \mathcal{T}$ is {\it true} in $(S,R,P,V)$ under $\mathcal{E}$ if $\GA \Rightarrow \DE, \mathcal{T}$ is not false in $(S,R,P,V)$ under $\mathcal{E}$.
A nested-sequent $\GA \Rightarrow \DE, \mathcal{T}$ is {\it valid in} $(S,R,P,V)$ if for all $\mathcal{E}$, $\GA \Rightarrow \DE, \mathcal{T}$ is true under $\mathcal{E}$.
A nested-sequent $\GA \Rightarrow \DE, \mathcal{T}$ is {\it valid} 
if it is valid in all $(S,R,P,V)$.

The {\it interpretation} $\tau$ of a nested-sequent to a formula is defined inductively as follows:

\begin{description}

\item $\tau (\GA \Rightarrow \DE) = \bigwedge \GA \to \bigvee \DE$.

\item $\tau (\GA \Rightarrow \DE, [\GA_1 \Rightarrow \DE_1, \mathcal{T}_1]^{d_1}_{\alpha_1}, \ldots,  [\GA_n \Rightarrow \DE_n, \mathcal{T}_n]^{d_n}_{\alpha_n} )$

$= \tau (\GA \to \DE) \vee  \square^{d_1}_{\alpha_1} \tau (\GA_1 \Rightarrow \DE_1, \mathcal{T}_1) \vee \ldots \vee \square^{d_n}_{\alpha_n} \tau (\GA_n \Rightarrow \DE_n, \mathcal{T}_n) $.

\end{description}

\noi
where $\bigwedge \GA$ denotes a formula connecting all the formulas in $\GA$ with $\wedge$, and $\bigvee \DE$ denotes a formula connecting all the formulas in $\DE$ with $\vee$.

As in the case of other studies of nested-sequent, the following theorem holds. 
\begin{theorem}

$\GA \Rightarrow \DE, \mathcal{T}$ is valid iff $\tau(\GA \Rightarrow \DE, \mathcal{T})$ is valid.

\end{theorem}
\begin{proof}
 $\tau(\GA \Rightarrow \DE, \mathcal{T})$ generally has the following form:

\begin{description}
 \item $(\bigwedge \GA \to \bigvee \DE) \vee  \square^{d_1}_{\alpha_1} ((\bigwedge \GA_1 \to \bigvee \DE_1) \vee T^1_1 \vee \ldots \vee T^1_m) \vee \ldots \vee \square^{d_n}_{\alpha_n} ((\bigwedge \GA_n \to \bigvee \DE_n) \vee T^n_1 \vee \ldots \vee T^n_l)$.
\end{description}

Suppose $\GA \Rightarrow \DE, \mathcal{T}$ is false under $\mathcal{E}$. 
Then, $\bigwedge \GA \to \bigvee \DE$ is false at $\mathcal{E}(\GA \Rightarrow \DE)$.
Furthermore, for all $i \in \{ 1,\ldots,n \}$, $ \GA_i \Rightarrow \DE_i$ is false at $\mathcal{E}(\GA_i \Rightarrow \DE_i)$ and $\alpha_i \leqq R(\mathcal{E}(\GA \Rightarrow \DE),\mathcal{E}(\GA_i \Rightarrow \DE_i))$ (if $d_i = c$) or $\alpha_i < R(\mathcal{E}(\GA \Rightarrow \DE),\mathcal{E}(\GA_i \Rightarrow \DE_i))$ (if $d_i = o$).
Continuing this procedure up to all leaves of the tree confirms that for all $i \in \{ 1,\ldots,n \}$ and for each $j$, $\square^{d_i}_{\alpha_i} ((\bigwedge \GA_i \to \bigvee \DE_i) \vee T^i_1 \vee \ldots \vee T^i_j)$ is false at $\mathcal{E}(\GA \Rightarrow \DE)$. 
Then, $\tau(\GA \Rightarrow \DE, \mathcal{T})$ is false at $\mathcal{E}(\GA \Rightarrow \DE)$.

Suppose $\tau(\GA \Rightarrow \DE, \mathcal{T})$ is false at $x \in S$.
Then $\GA \Rightarrow \DE$ is false at $x$.
Furthermore, for all $i \in \{ 1,\ldots,n \} $, there exists $x_i \in S$ such that $ \GA_i \Rightarrow \DE_i$ is false at $x_i$ and
$\alpha_i \leqq R(x,x_i)$ (if $d_i = c$) or $\alpha_i < R(x,x_i)$ (if $d_i = o$). This notion applies inductively to each $T^i_j$ until it reaches the leaves.
$\mathcal{E}$ is defined as a function that transfers each sequent to each element that makes it false.
That is, $\mathcal{E}(\GA \Rightarrow \DE) = x$, 
$\mathcal{E}(\GA_1 \Rightarrow \DE_1) = x_1, \ldots$.
Then, $\GA \Rightarrow \DE, \mathcal{T}$ is false under $\mathcal{E}$.

\end{proof}

\section{Nested-sequent calculus {\bf NSMB}}\label{sNSMB}

This section discusses the nested-sequent calculus for {\bf MB} that satisfies the cut-elimination theorem.
The nested-sequent calculus {\bf NSMB} is defined as follows:



\noi
Axioms$^{}$: \begin{center} $\SEQ{\| A}{A, \mathcal{T} \|}$ \ \ \ \ $\SEQ{ \|}{ \top, \mathcal{T}\|}$ \ \ \ \  $\SEQ{ \| \bot}{ \mathcal{T}\|}$  \ \ \ \  $\SEQ{ \| }{\square^{o}_{1} A, \mathcal{T}\|}$  \end{center} \qquad 
\\
Rules:
$$
\infer[\mbox{(cut)}]{\SEQ{\| \GA}{\DE, \mathcal{T} \|}}{
 \SEQ{\| \GA}{\DE, A, \mathcal{T} \|} & \SEQ{\| A, \GA}{\DE, \mathcal{T} \|}}
$$ 
$$
\infer[\mbox{(wL)$^{}$}]{\SEQ{\| A, \GA}{\DE, \mathcal{T} \|}}
 {\SEQ{\| \GA}{\DE, \mathcal{T} \|}}
\hspace{5mm}
\infer[\mbox{(wR)$^{}$}]{\SEQ{\| \GA}{\DE, A, \mathcal{T} \|}}
 {\SEQ{\| \GA}{\DE, \mathcal{T} \|}}
\hspace{5mm}
\infer[\mbox{(\NOT L)$^{}$}]{\SEQ{\| \neg A, \GA}{\DE, \mathcal{T} \|}}
 {\SEQ{\| \GA}{\DE, A, \mathcal{T} \|}}
\hspace{5mm}
\infer[\mbox{(\NOT R)$^{}$}]{\SEQ{\| \GA}{\DE, \neg A, \mathcal{T} \|}}
 {\SEQ{\| A, \GA}{\DE, \mathcal{T} \|}}
$$
$$
\infer[\mbox{(\AND L)$^{}$}]{\SEQ{\| A \wedge B, \GA}{\DE, \mathcal{T} \|}}
 {\SEQ{\| A, B, \GA}{\DE, \mathcal{T} \|}}
\hspace{5mm}
\infer[\mbox{(\AND R)}]{\SEQ{\| \GA}{\DE, A \wedge B, \mathcal{T} \|}}{
 \SEQ{\| \GA}{\DE, A, \mathcal{T}\|} & \SEQ{\| \GA}{\DE, B, \mathcal{T}\|}}
$$ 
$$
\infer[\mbox{($\square$ L) $^{(1)}$}]{\SEQ{\| \square^{d}_{\alpha} A, \GA}{\DE, \ [\GA' \Rightarrow \DE', \mathcal{T'}]^{d'}_{\beta}, \mathcal{T} \|}}{
 \SEQ{\| \GA}{\DE, [A, \GA' \Rightarrow \DE', \mathcal{T'}]^{d'}_{\beta}, \mathcal{T} \|}}
\hspace{5mm}
\infer[\mbox{($\square$ L sym) $^{(1)}$}]{\SEQ{\| \GA}{\DE, \ [\square^{d}_{\alpha} A, \GA' \Rightarrow \DE', \mathcal{T'}]^{d'}_{\beta}, \mathcal{T} \|}}{
 \SEQ{\| A, \GA}{\DE, [\GA' \Rightarrow \DE', \mathcal{T'}]^{d'}_{\beta}, \mathcal{T} \|}}
$$
$$
\infer[\mbox{($\square$ L self) $^{(2)}$}]{\SEQ{\| \square^{d}_{\alpha} A, \GA}{\DE, \mathcal{T} \|}}{
 \SEQ{\| A, \GA}{\DE, \mathcal{T} \|}}
\hspace{5mm}
\infer[\mbox{($\square^c_0)^{(3)}$ }]{\SEQ{\| \square^c_0 A, \GA'}{\DE', \mathcal{T'} \|}}{
 \SEQ{\| A, \GA}{\DE, \mathcal{T} \|}}
\hspace{5mm}
$$
$$
\infer[\mbox{($\square$ R) $^{}$}]{\SEQ{\| \GA}{\DE, \square^{d}_{\alpha} A , \mathcal{T} \|}}{
 \SEQ{\| \GA}{\DE, [\Rightarrow A]^{d}_{\alpha}, \mathcal{T} \|}}
\hspace{5mm}
\infer[\mbox{($\square$ R self) }]{\SEQ{\| \GA}{\DE, \square^c_1 A, \mathcal{T} \|}}{
 \SEQ{\| \GA}{\DE, A, \mathcal{T} \|}}
\hspace{5mm}
$$

\begin{description}

\item * In all rules except $(\square^c_0)$, the parts other than those specified parts must be the same at the top and bottom.
For example, in ($\neg$ L), the only difference between the upper and lower nested-sequents is the change from
$\GA \Rightarrow \DE, A$ to $\neg A, \GA \Rightarrow \DE$ in the stated node $\GA \Rightarrow \DE, A$.
In the case of (cut) and ($\wedge$R), this condition is also imposed on the top two sequents. In the case of (cut) and ($\wedge$R), the top two and the bottom one nested-sequents must be the same for all three except for the stated parts.

\item (1) $(\alpha, d) \preceq (\beta, d')$ . 

\item (2) $(\alpha, d) \neq (1, o)$. 

\item (3) This rule erases $A$ from the left of one node in the tree and adds $\square^c_0 A$ to the left of another arbitrary node of the same tree.
\end{description}

The following deduction is an example of a proof of $A \Rightarrow \square^{c}_{0.5}  \Diamond^{o}_{0.3} A$ in {\bf NSMB}.

$$
\infer[\mbox{($\square$ R)}]{\SEQ{ A}{\square^{c}_{0.5}  \Diamond^{o}_{0.3} A}}
{
\infer[\mbox{($\neg$ R)}]{\SEQ{ A}{ [\Rightarrow \neg \square^{o}_{0.3} \neg A ]^{c}_{0.5}}}
{
\infer[\mbox{($\square$ L sym)}]{\SEQ{ A}{[\square^{o}_{0.3} \neg A \Rightarrow ]^{c}_{0.5}}}
{
    \infer[\mbox{($\neg$ L)}]{\SEQ{\neg A, A}{[\Rightarrow ]^{c}_{0.5}}}
    {
     {\SEQ{A}{A, [\Rightarrow ]^{c}_{0.5}}}
    }
  }
}
}
$$

\begin{theorem}[Soundness theorem for {\bf NSMB}]\label{soTSMB} If $\GA \Rightarrow \DE, \mathcal{T}$ is provable in {\bf NSMB}, then
$\GA \Rightarrow \DE, \mathcal{T}$ is valid.
\end{theorem}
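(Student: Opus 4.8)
The plan is to argue by induction on the height of a NSMB-derivation of $\GA \Rightarrow \DE, \mathcal{T}$, showing that each axiom is valid and each rule preserves validity. Because ``valid'' is defined negatively (true under every embedding, where ``false'' is the positive notion), it is cleaner to establish the contrapositive for the rules: if the conclusion is false in some realization $(S,R,P,V)$ under some embedding $\mathcal{E}$, then some premise is false under a suitable embedding. Throughout I would use the abbreviations $U^{c}_{\alpha} = \{x \in I : \alpha \leqq x\}$ and $U^{o}_{\alpha} = \{x \in I : \alpha < x\}$, so that $s \models \square^{d}_{\alpha} A$ holds exactly when $t \in V(A)$ for every $t$ with $R(s,t) \in U^{d}_{\alpha}$, and so that a bracket $[\ ]^{d}_{\alpha}$ imposes precisely $R(\text{parent},\text{child}) \in U^{d}_{\alpha}$ on an embedding. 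A direct check of the order $\prec$ against the inclusion of these upper sets gives the identity $(\alpha,d)\preceq(\beta,d')$ iff $U^{d'}_{\beta}\subseteq U^{d}_{\alpha}$, which drives the modal cases.

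For the base case I would verify the four axioms at the single distinguished node. $\| A \Rightarrow A, \mathcal{T}\|$, $\| \bot \Rightarrow \mathcal{T}\|$ and $\| \Rightarrow \top, \mathcal{T}\|$ are immediate from $V(\top)=S$, $V(\bot)=\emptyset$ and the impossibility of $A$ being simultaneously true and false. For $\| \Rightarrow \square^{o}_{1}A, \mathcal{T}\|$ I would use $R(s,t)\leqq 1$ for all $s,t$: since $U^{o}_{1}=\emptyset$, the world condition for $\square^{o}_{1}A$ is vacuous, hence $\square^{o}_{1}A$ is true at every world and the axiom is valid.

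For the inductive step the propositional rules (wL), (wR), ($\neg$L), ($\neg$R), ($\wedge$L), ($\wedge$R) and (cut) are routine: the tree shape is unchanged, the same $\mathcal{E}$ witnesses falsity of a premise, and the reasoning at the affected node is the classical one. The modal left rules are where the side condition does its work. For ($\square$L), if the conclusion is false under $\mathcal{E}$ then $\square^{d}_{\alpha}A$ is true at the parent world $s$ and the child world $t$ satisfies $R(s,t)\in U^{d'}_{\beta}$; the side condition gives $U^{d'}_{\beta}\subseteq U^{d}_{\alpha}$, so $R(s,t)\in U^{d}_{\alpha}$ and hence $t\models A$, making the premise false under the same $\mathcal{E}$. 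The rule ($\square$L sym) is handled the same way but additionally invokes symmetry: from $t\models\square^{d}_{\alpha}A$ and $R(t,s)=R(s,t)\in U^{d'}_{\beta}\subseteq U^{d}_{\alpha}$ one obtains $s\models A$. The ``self'' rules use reflexivity $R(s,s)=1$: for ($\square$L self) the side condition $(\alpha,d)\neq(1,o)$ guarantees $1\in U^{d}_{\alpha}$, so $s\models\square^{d}_{\alpha}A$ forces $s\models A$; for ($\square$R self) the same reflexivity shows that $\square^{c}_{1}A$ and $A$ have equal truth value at every world. For $(\square^c_0)$ I would use $U^{c}_{0}=I$, so $s'\models\square^{c}_{0}A$ forces $A$ to hold at every world of $S$; thus if the conclusion is false under $\mathcal{E}$ then $A$ is in particular true at the world of the node from which it was erased, and the premise is false under the very same $\mathcal{E}$ (this is the one rule where the marked node of the premise differs from that of the conclusion, but the tree and $\mathcal{E}$ are otherwise untouched).

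The step I expect to be the main obstacle is ($\square$R), since it is the only rule that enlarges the tree and therefore forces me to \emph{extend} the embedding rather than reuse it. If the conclusion $\|\GA\Rightarrow\DE,\square^{d}_{\alpha}A,\mathcal{T}\|$ is false under $\mathcal{E}$, then $\square^{d}_{\alpha}A$ is false at $s=\mathcal{E}(\GA\Rightarrow\DE)$, which yields a witness $t\in S$ with $R(s,t)\in U^{d}_{\alpha}$ and $t\not\models A$. I would then define $\mathcal{E}'$ to agree with $\mathcal{E}$ on all existing nodes and to send the fresh leaf $\Rightarrow A$ to $t$; the crux is to check that $\mathcal{E}'$ is a legitimate embedding, and the only new constraint is the one imposed by the added bracket $[\ ]^{d}_{\alpha}$, which demands exactly $R(s,t)\in U^{d}_{\alpha}$, i.e.\ the property of the witness. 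Since $t\not\models A$ makes the new leaf false and all other nodes are unchanged, the premise is false under $\mathcal{E}'$, closing the induction. Together with the symmetry rule, this embedding-extension bookkeeping is the only genuinely non-classical part of the argument.
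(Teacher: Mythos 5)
Your proposal is correct and follows essentially the same route as the paper's proof: induction on the derivation, arguing contrapositively that falsity of the conclusion under an embedding yields falsity of a premise under a (possibly renamed or extended) embedding, with the crucial embedding-extension to the witness world in the ($\square$ R) case, exactly as in the paper. Your identity $(\alpha,d)\preceq(\beta,d')$ iff $U^{d'}_{\beta}\subseteq U^{d}_{\alpha}$ is just a compact repackaging of the paper's four-case analysis ($d,d'\in\{c,o\}$) for ($\square$ L), and you additionally write out the sym/self/$\square^{c}_{0}$ cases that the paper dismisses as simpler, so no gap remains.
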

\begin{proof}
It is proved by induction on the construction of the proof of nested-sequent $\GA \Rightarrow \DE, \mathcal{T}$.
We only show the cases in which the last rule used in the proof is ($\square$ L) or ($\square$ R).
The proofs for the other cases are simpler. 
First, we show the case in which the
last rule is ($\square$ L).
 
Suppose that $\| \square^{d}_{\alpha} A, \GA \Rightarrow \DE, \ [\GA' \Rightarrow \DE',\mathcal{T'}]^{d'}_{\beta}, \mathcal{T}\|$
is false in $(S,R,P,V)$ under embedding $\mathcal{E}$.
Then, $\square^{d}_{\alpha} A$ is true at $\mathcal{E}(\square^{d}_{\alpha} A, \GA \Rightarrow \DE)$.
From the condition of the rule,
$(\alpha, d) \preceq (\beta, d')$. 
\begin{description}

\item In the case of $d=d'=c$, from the definition of embedding, 
$\beta \leqq R((\mathcal{E}(\square^{c}_{\alpha} A, \GA \Rightarrow \DE), (\mathcal{E}(\GA' \Rightarrow \DE'))$. Therefore, $\alpha \leqq R((\mathcal{E}(\square^{c}_{\alpha} A, \GA \Rightarrow \DE), (\mathcal{E}(\GA' \Rightarrow \DE'))$.

\item In the case of $d=c$ and $d'=o$, from the definition of embedding, 
$\beta < R((\mathcal{E}(\square^{c}_{\alpha} A, \GA \Rightarrow \DE), (\mathcal{E}(\GA' \Rightarrow \DE'))$. Therefore, $\alpha < R((\mathcal{E}(\square^{c}_{\alpha} A, \GA \Rightarrow \DE), (\mathcal{E}(\GA' \Rightarrow \DE'))$.

\item In the case of $d=o$ and $d'=c$, from the definition of
$\prec$, $\alpha < \beta$.
From the definition of embedding, 
$\beta \leqq R((\mathcal{E}(\square^{o}_{\alpha} A, \GA \Rightarrow \DE), (\mathcal{E}(\GA' \Rightarrow \DE'))$. Therefore, $\alpha < R((\mathcal{E}(\square^{o}_{\alpha} A, \GA \Rightarrow \DE), (\mathcal{E}(\GA' \Rightarrow \DE'))$.

\item In the case of $d=d'=o$, from the definition of
$\prec$, $\alpha < \beta$.
From the definition of embedding, 
$\beta < R((\mathcal{E}(\square^{o}_{\alpha} A, \GA \Rightarrow \DE), (\mathcal{E}(\GA' \Rightarrow \DE'))$. Therefore, $\alpha < R((\mathcal{E}(\square^{o}_{\alpha} A, \GA \Rightarrow \DE), (\mathcal{E}(\GA' \Rightarrow \DE'))$.

\end{description}

Therefore, in any case, $A$ is true at $\mathcal{E}(\GA' \Rightarrow \DE')$, 
and $\| \GA \Rightarrow \DE, \ [A, \GA' \Rightarrow \DE',\mathcal{T'}]^{d'}_{\beta}, \mathcal{T}\|$ is 
false under $\mathcal{E'}$ where $\mathcal{E'}$ is exactly the same as $\mathcal{E}$ except that $\mathcal{E'}(\GA \Rightarrow \DE) = \mathcal{E}(\square^{d}_{\alpha} A, \GA \Rightarrow \DE)$ and $\mathcal{E'}(A, \GA' \Rightarrow \DE') = \mathcal{E}(\GA' \Rightarrow \DE')$.

Next, we show the case where the
last rule is ($\square$ R).  
Suppose that $\| \GA \Rightarrow \DE, \square^{d}_{\alpha} A, \mathcal{T} \|$ is false in $(S,R,P,V)$ under $\mathcal{E}$.
Then there exists $s \in S$ such that $\mathcal{E}(\GA \Rightarrow \DE, \square^{d}_{\alpha} A ) (\beta) s$, $\alpha \leqq \beta$ (if $d =c$), $\alpha < \beta$ (if $d = o$), and $A$ is false at $s$.
Let $\mathcal{E'}$ be the embedding from $\| \GA \Rightarrow \DE, [ \Rightarrow A ]^{d}_{\alpha} \|$ to $(S,R,P,V)$ such that $\mathcal{E'}(\Rightarrow A) = s$,
$\mathcal{E'}(\GA \Rightarrow \DE) = \mathcal{E}(\GA \Rightarrow \DE, \square^{d}_{\alpha} A) $ and $\mathcal{E'} = \mathcal{E}$ for the other sequents.
\ Then, $\| \GA \Rightarrow \DE, [ \Rightarrow A ]^{d}_{\alpha} \|$
is false in $(S,R,P,V)$ under $\mathcal{E}'$.
\end{proof}

For the completeness theorem, the contraposition of the theorem is proved. In
other words, we show that if a nested-sequent $\GA \Rightarrow \DE, \mathcal{T}$ is not provable in {\bf NSMB}, then an {\bf MB}-realization $(S,R,P,V)$ exists with an embedding $\mathcal{E}$ of $\GA \Rightarrow \DE, \mathcal{T}$ to $(S,R,P,V)$ such that $\GA \Rightarrow \DE, \mathcal{T}$ is
false in $(S,R,P,V)$ under $\mathcal{E}$.

Suppose $\GA \Rightarrow \DE, \mathcal{T}$ is not provable.
(We assume that $\GA \Rightarrow \DE, \mathcal{T}$ is fixed to one particular nested-sequent to the end of this section.)
To construct a model in which $\GA \Rightarrow \DE, \mathcal{T}$ is false, a new nested-sequent $\GA_C \Rightarrow \DE_C, \mathcal{T}_C$ is formed from $\GA \Rightarrow \DE, \mathcal{T}$ by the following iterative procedure.
This procedure is continued until the nested-sequent is no longer changed by applying any of the following steps.
Changes in the sequent are denoted by $\GA_{0} \Rightarrow \DE_{0}, \mathcal{T}_0  (=\GA \Rightarrow \DE, \mathcal{T}), 
\GA_{1} \Rightarrow \DE_{1}, \mathcal{T}_{1},...,
\GA_{i} \Rightarrow \DE_{i}, \mathcal{T}_{i}, 
\GA_{i+1} \Rightarrow \DE_{i+1}, \mathcal{T}_{i+1}....$

\begin{enumerate}

 \item\label{wedgeL}

If $\| \GA' \Rightarrow \DE', \mathcal{T}' \| = \GA_{i} \Rightarrow \DE_{i}, \mathcal{T}_{i}$ and $A \wedge B \in \GA'$, then we construct $\GA_{i+1} \Rightarrow \DE_{i+1}, \mathcal{T}_{i+1}$ by
adding $A$ and $B$ to $\GA'$ of $\GA_{i} \Rightarrow \DE_{i}, \mathcal{T}_{i}$.
That is, $\GA_{i+1} \Rightarrow \DE_{i+1}, \mathcal{T}_{i+1}$ = $\| A,B,\GA' \Rightarrow \DE', \mathcal{T}' \|$.
This new nested-sequent is also not provable because of the rule ($\wedge$L).

\item\label{wedgeR}
If $\| \GA' \Rightarrow \DE', \mathcal{T}' \| = \GA_{i} \Rightarrow \DE_{i}, \mathcal{T}_{i}$ and $A \wedge B \in \DE'$,
at least one of $\| \GA' \Rightarrow \DE', A, \mathcal{T}' \|$ and $\| \GA' \Rightarrow \DE' , B, \mathcal{T}' \|$ is not provable because of the rule ($\wedge$R).
Of these, the unprovable one is adopted as $\GA_{i+1} \Rightarrow \DE_{i+1}, \mathcal{T}_{i+1}$.

\item\label{negL}
If $\| \GA' \Rightarrow \DE', \mathcal{T}' \| = \GA_{i} \Rightarrow \DE_{i}, \mathcal{T}_{i}$ and $\neg A \in \GA'$, then we construct $\GA_{i+1} \Rightarrow \DE_{i+1}, \mathcal{T}_{i+1} = \| \GA' \Rightarrow \DE', A, \mathcal{T}' \|$.  
This new nested-sequent is also not provable because of the rule ($\neg$L).

\item\label{negR}
If $\| \GA' \Rightarrow \DE', \mathcal{T}' \| = \GA_{i} \Rightarrow \DE_{i}, \mathcal{T}_{i}$ and $\neg A \in \DE'$, then we construct $\GA_{i+1} \Rightarrow \DE_{i+1}, \mathcal{T}_{i+1} = \| A, \GA' \Rightarrow \DE', \mathcal{T}' \|$.  
This new nested-sequent is also not provable because of the rule ($\neg$R).

\item\label{boxL}
If $\| \GA' \Rightarrow \DE', [\GA'' \Rightarrow \DE'', \mathcal{T}'']^{d'}_{\beta}, \mathcal{T}' \| = \GA_{i} \Rightarrow \DE_{i}, \mathcal{T}_{i}$, $(\alpha, d) \preceq (\beta, d')$, and $\square^{d}_{\alpha} A \in \GA'$, then we construct $\GA_{i+1} \Rightarrow \DE_{i+1}, \mathcal{T}_{i+1} = \| \GA' \Rightarrow \DE', [A, \GA'' \Rightarrow \DE'', \mathcal{T}'']^{d'}_{\beta}, \mathcal{T}' \|$.  
This new nested-sequent is also not provable because of the rule ($\square$L).

\item\label{boxLsym}
If $\| \GA' \Rightarrow \DE', [\GA'' \Rightarrow \DE'', \mathcal{T}'']^{d'}_{\beta}, \mathcal{T}' \| = \GA_{i} \Rightarrow \DE_{i}, \mathcal{T}_{i}$, $(\alpha, d) \preceq (\beta, d')$, and $\square^{d}_{\alpha} A \in \GA''$, then we construct $\GA_{i+1} \Rightarrow \DE_{i+1}, \mathcal{T}_{i+1} = \| A, \GA' \Rightarrow \DE', [\GA'' \Rightarrow \DE'', \mathcal{T}'']^{d'}_{\beta}, \mathcal{T}' \|$.  
This new nested-sequent is also not provable because of the rule ($\square$L sym).

\item\label{boxLself}
If $\| \GA' \Rightarrow \DE', \mathcal{T}' \| = \GA_{i} \Rightarrow \DE_{i}, \mathcal{T}_{i}$, and $\square^{d}_{\alpha} A \in \GA'$ $((\alpha, d) \neq (1, o))$, then we construct $\GA_{i+1} \Rightarrow \DE_{i+1}, \mathcal{T}_{i+1} = \| A, \GA' \Rightarrow \DE', \mathcal{T}' \|$.  
This new nested-sequent is also not provable because of the rule ($\square$L self).

\item\label{boxR}
If $\| \GA' \Rightarrow \DE', \mathcal{T}' \| = \GA_{i} \Rightarrow \DE_{i}, \mathcal{T}_{i}$, and $\square^{d}_{\alpha} A \in \DE'$ $(\alpha \neq 1)$, then we construct $\GA_{i+1} \Rightarrow \DE_{i+1}, \mathcal{T}_{i+1} = \| \GA' \Rightarrow \DE', [\Rightarrow A ]^{d}_{\alpha}, \mathcal{T}' \|$.  
This new nested-sequent is also not provable because of the rule ($\square$R).
This step is performed once per occurrence of $\square^{d}_{\alpha} A$.

\item\label{boxRself}
If $\| \GA' \Rightarrow \DE', \mathcal{T}' \| = \GA_{i} \Rightarrow \DE_{i}, \mathcal{T}_{i}$, and $\square^{c}_{1} A \in \DE'$, then we construct $\GA_{i+1} \Rightarrow \DE_{i+1}, \mathcal{T}_{i+1} = \| \GA' \Rightarrow \DE', A, \mathcal{T}' \|$.  
This new nested-sequent is also not provable because of the rule ($\square$R self).

\item\label{boxC0}
If $\| \GA' \Rightarrow \DE', \mathcal{T}' \| = \| \GA'' \Rightarrow \DE'', \mathcal{T}'' \| = \GA_{i} \Rightarrow \DE_{i}, \mathcal{T}_{i}$, that is, $\GA' \Rightarrow \DE'$ and $\GA'' \Rightarrow \DE''$ are (could be the same) nodes of $\GA_{i} \Rightarrow \DE_{i}, \mathcal{T}_{i}$, and if $\square^{c}_{0} A \in \GA'$, then we construct $\GA_{i+1} \Rightarrow \DE_{i+1}, \mathcal{T}_{i+1} = \| A, \GA'' \Rightarrow \DE'' \mathcal{T}'' \|$.  
This new nested-sequent is also not provable because of the rule ($\square^{c}_{0}$).

\end{enumerate}

This procedure stops within a finite number of steps for the following reasons:
\begin{itemize}
\item The number of nodes and formulas appearing in $\GA_{i} \Rightarrow \DE_{i}, \mathcal{T}_{i}$ is always finite.
\item All of the procedures decrease the complexity of the formulas.
\item Step \ref{boxR} increases the number of nodes, but it is applied only once at most for one formula. 
In this procedure, only subformulas of the formulas in the first nested-sequent appear.
Therefore, the number of nodes can only increase by a finite amount from the initial nested-sequent.
\end{itemize}

Let $\GA_C \Rightarrow \DE_C, \mathcal{T}_C$ be the nested-sequent obtained at the end of this procedure, that is not provable. 
A {\it canonical model} is constructed from $\GA_C \Rightarrow \DE_C, \mathcal{T}_C$ with the following notion. 

We say a ordered set $\mathbb{U}$ is an {\it interpolated set} of $(\GA \Rightarrow \DE, \mathcal{T})_J$
if it satisfies the following conditions:

\begin{enumerate}
\item $(\GA \Rightarrow \DE, \mathcal{T})_J \subset \mathbb{U}$

 \item If $\alpha \in (\GA \Rightarrow \DE, \mathcal{T})_J$, $\beta \in (\GA \Rightarrow \DE, \mathcal{T})_J$, $\alpha \neq \beta$, and there is no $\gamma \in (\GA \Rightarrow \DE, \mathcal{T})_J$ that satisfies $\alpha < \gamma < \beta$, then there exists exactly one $\delta \in I$ in $\mathbb{U}$ that satisfies $\alpha < \delta < \beta$.  
\end{enumerate}

For example, $\{0, 0.05, 0.1, 0.15, 0.2, 0.4, 0.7, 0.9, 1\}$ is an interpolated set of $\{0, 0.1, 0.2, 0.7, 1\}$.
This set is necessary to ensure that all modalities do not affect each other when constructing a canonical model.
We write $Suc(\alpha)$ for the successor of element $\alpha$ in an interpolated set with $Suc(1) = 1$.

Let $\mathbb{U}_C$ be a certain interpolated set of $(\GA_C \Rightarrow \DE_C, \mathcal{T}_C)_J$.
A canonical model $(S_C,R_C, P_C, V_C)$ of $\GA_C \Rightarrow \DE_C, \mathcal{T}_C$ (with $\mathbb{U}_C$) is defined as follows:

\begin{description}

\item $S_C \stackrel{\mathrm{def}}{=} (\GA_C \Rightarrow \DE_C, \mathcal{T}_C)_N$

\item $R_C$: Defined in the following cases:
\renewcommand{\labelenumi}{(\Roman{enumi})}\begin{enumerate}
\item If $\GA' \Rightarrow \DE', [\GA'' \Rightarrow \DE'', \mathcal{T}'']^{c}_{\beta}, \mathcal{T}' \triangleleft \GA_C \Rightarrow \DE_C, \mathcal{T}_C$, then $R_C((\GA' \Rightarrow \DE'), (\GA'' \Rightarrow \DE'')) \stackrel{\mathrm{def}}{=} \beta$.

\item If $\GA' \Rightarrow \DE', [\GA'' \Rightarrow \DE'', \mathcal{T}'']^{o}_{\beta}, \mathcal{T}' \triangleleft \GA_C \Rightarrow \DE_C, \mathcal{T}_C$, then $R_C((\GA' \Rightarrow \DE'), (\GA'' \Rightarrow \DE'')) \stackrel{\mathrm{def}}{=} Suc(\beta)$.

\item $R_C((\GA' \Rightarrow \DE'), (\GA' \Rightarrow \DE')) \stackrel{\mathrm{def}}{=} 1$. (Same nodes)

\item In all other cases, $R_C((\GA' \Rightarrow \DE'), (\GA'' \Rightarrow \DE'')) \stackrel{\mathrm{def}}{=} 0$.

\end{enumerate}

\item $P_C$ $\stackrel{\mathrm{def}}{=} \{ S' \subseteq S|\exists A\  V_C(A) = S'  \}$

\item $V_C(p) \stackrel{\mathrm{def}}{=} \{ \GA' \Rightarrow \DE' | p \in \GA'\}$

\end{description}

\begin{lemma}\label{TSMBreal} 
$(S_C,R_C, P_C, V_C)$ is an {\bf MB}-realization.
\end{lemma}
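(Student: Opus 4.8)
The plan is to verify that the constructed tuple $(S_C, R_C, P_C, V_C)$ satisfies every clause in the definition of an \textbf{MB}-realization. This breaks into three bundles of obligations: (i) $(S_C, R_C)$ is an EQL-frame, i.e.\ $R_C$ is a well-defined $I$-valued function satisfying reflexivity ($R_C(s,t)=1$ iff $s=t$) and symmetry ($R_C(s,t)=R_C(t,s)$); (ii) $P_C$ is a collection of subsets of $S_C$ containing $\emptyset$ and $S_C$ and closed under finite intersection, relative complement, and the two series of set-operations $\square^c_\alpha, \square^o_\alpha$ for each $\alpha \in J$; and (iii) $V_C$ maps propositional variables into $P_C$. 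I would address these in turn.

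First I would check that $R_C$ is \emph{well defined} as a function $S_C \times S_C \to I$. The four defining cases (I)--(IV) are exhaustive, but one must confirm they do not conflict. Since the nested-sequent is a tree, any two distinct nodes are joined by at most one edge, so clauses (I) and (II) never compete; clause (III) governs the diagonal, where $R_C = 1$; and clause (IV) supplies $0$ everywhere else. The value assigned in (I) is $\beta$ and in (II) is $Suc(\beta)$, both lying in $I$, so the codomain is respected. For \emph{reflexivity}, the forward direction is immediate from (III). For the converse I would argue that no off-diagonal pair can receive the value $1$: a $c$-edge carries label $\beta \in J - \{1\}$, hence $\beta < 1$; an $o$-edge carries $Suc(\beta)$, and since $\beta \in J-\{1\}$ lies strictly below $1$ in the interpolated set $\mathbb{U}_C$, its successor is still $\le 1$ and in fact $<1$ unless $\beta$ is the immediate predecessor of $1$ — here I would appeal to the interpolation condition, which inserts a fresh element strictly between the largest non-unit label and $1$, guaranteeing $Suc(\beta) < 1$. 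Thus $R_C(s,t)=1$ forces $s=t$. For \emph{symmetry}, the subtle point is that clauses (I) and (II) only assign a value to the \emph{parent-to-child} ordered pair; the reverse ordered pair falls under clause (IV) and receives $0$, which would break symmetry unless handled. I expect this to be the main obstacle and would resolve it by reading the clauses as defining $R_C$ symmetrically — i.e.\ the value assigned to an edge is assigned to \emph{both} ordered pairs of its endpoints — which is exactly the design intent recorded in the surrounding text (symmetry is the whole reason for the nested-sequent machinery). Once symmetry is read in this way, $R_C(s,t)=R_C(t,s)$ holds by construction, with the diagonal and the no-edge cases being manifestly symmetric.

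Next I would treat $P_C$. By definition $P_C = \{ S' \subseteq S_C \mid \exists A\ V_C(A)=S'\}$, where $V_C$ is understood to be extended to all formulas by the standard inductive clauses. Membership of $\emptyset$ and $S_C$ follows from $V_C(\bot)=\emptyset$ and $V_C(\top)=S_C$. Closure under intersection, relative complement, and the operations $\square^c_\alpha, \square^o_\alpha$ then follows formula-by-formula: if $S' = V_C(A)$ and $S'' = V_C(B)$, then $S'\cap S'' = V_C(A\wedge B)$, $S_C \setminus S' = V_C(\neg A)$, and $\square^d_\alpha S' = V_C(\square^d_\alpha A)$, where the last identity is exactly the inductive valuation clause for $\square^d_\alpha$ matched against the set-operation definition. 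Each witness is again of the form $V_C(\cdot)$, so the result lies in $P_C$; closure is thus essentially automatic from the way $P_C$ is defined as the image of $V_C$.

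Finally, $V_C(p) = \{\GA' \Rightarrow \DE' \mid p \in \GA'\}$ is a subset of $S_C$ and equals $V_C(p)$ by definition, so it lies in $P_C$ trivially; hence $V_C$ is a legitimate valuation into $P_C$. Assembling the three bundles gives that $(S_C, R_C, P_C, V_C)$ meets every requirement, completing the proof. I anticipate that the only genuinely delicate verification is the symmetry/reflexivity analysis of $R_C$ — in particular pinning down that interpolated successors of non-unit labels stay strictly below $1$, and making explicit that the edge clauses are to be read symmetrically — while the closure properties of $P_C$ are a formality inherited from its definition.
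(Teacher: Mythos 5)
Your proposal is correct and follows essentially the same route as the paper's proof: a direct verification that $R_C$ assigns a single value to each pair of nodes, that $R_C(s,t)=1$ forces $s=t$ (because $c$-bracket labels lie in $J-\{1\}$ and, by the interpolation property of $\mathbb{U}_C$, successors of $o$-bracket labels stay strictly below $1$), and that closure of $P_C$ is read off formula-by-formula from the inductive valuation clauses. The only place you go beyond the paper is symmetry: the paper's proof never mentions it, tacitly reading clauses (I) and (II) as assigning the edge value to both ordered pairs of endpoints, which is precisely the reading you make explicit — a worthwhile clarification rather than a different approach.
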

\begin{proof}

By the definition of $R_C$, every pair of nodes is associated with a single number.
Furthermore, it is only in the case of $s = t$ that $R_C(s,t) =1$
for the following reasons.
From the definition of the bracket in a nested-sequent,
if $\GA' \Rightarrow \DE', [\GA'' \Rightarrow \DE'', \mathcal{T}'']^{c}_{\beta}, \mathcal{T}' \triangleleft \GA_C \Rightarrow \DE_C, \mathcal{T}_C$, then $\beta \neq 1$,
and if $\GA' \Rightarrow \DE', [\GA'' \Rightarrow \DE'', \mathcal{T}'']^{o}_{\beta}, \mathcal{T}' \triangleleft \GA_C \Rightarrow \DE_C, \mathcal{T}_C$, then $Suc(\beta) \neq 1$,
because of the definition of $\mathbb{U}_C$ and $\beta \in (\GA_C \Rightarrow \DE_C, \mathcal{T}_C)_J$.

The definition of $V$ for compound formulas corresponds to each condition of $P$.
For example, $V(A \wedge B) = V(A) \cap V(B)$ corresponds to the condition that $P$ is closed under a set-theoretic finite intersection.
Therefore, $P_C$ meets the conditions of $P$.
\end{proof}

The embedding $\mathcal{E}_C$ form $\GA \Rightarrow \DE, \mathcal{T}$
to $(S_C,R_C, P_C, V_C)$ is defined as follows.
From the configuration of $\GA_C \Rightarrow \DE_C, \mathcal{T}_C$, all the nodes that existed in $\GA \Rightarrow \DE, \mathcal{T} (=  \GA_{0} \Rightarrow \DE_{0}, \mathcal{T}_0)$ also exist in $\GA_C \Rightarrow \DE_C, \mathcal{T}_C$ (but with the added formulas).
$\mathcal{E}_C$ is defined as a function that transfers to that ``same'' node.
It can be proved from the composition of $\GA_C \Rightarrow \DE_C, \mathcal{T}_C$ and the definition of $R_C$ that $\mathcal{E}_C$ satisfies the embedding conditions.

\begin{lemma}\label{incS}
If $\mathcal{E}_C(\GA' \Rightarrow \DE') = \GA'' \Rightarrow \DE''$ and $A \in \GA' (A \in \DE')$, then $A \in \GA'' (A \in \DE'')$.
\end{lemma}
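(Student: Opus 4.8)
The plan is to reduce the claim to the fact that the expansion procedure building $\GA_C \Rightarrow \DE_C, \mathcal{T}_C$ from $\GA \Rightarrow \DE, \mathcal{T}$ is \emph{monotone}: each of the steps \ref{wedgeL}--\ref{boxC0} only \emph{adds} formulas to existing nodes (and, for step \ref{boxR}, appends a fresh leaf), while no step ever deletes a formula from any node. Recall that $\mathcal{E}_C$ sends each node of $\GA \Rightarrow \DE, \mathcal{T}$ to the ``same'' node as it survives inside $\GA_C \Rightarrow \DE_C, \mathcal{T}_C$. To make this precise I would fix, for each stage $i$, the canonical injection $\iota_i$ from the nodes of $\GA_0 \Rightarrow \DE_0, \mathcal{T}_0\ (=\GA \Rightarrow \DE, \mathcal{T})$ into the nodes of $\GA_i \Rightarrow \DE_i, \mathcal{T}_i$; this is well defined because the only step altering the set of nodes is \ref{boxR}, which merely adds a new leaf without disturbing any existing node, whereas every other step preserves the tree shape and changes only the formula labels of a single node. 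The embedding $\mathcal{E}_C$ is then exactly $\iota_C$ read as a map into $S_C$.

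The core of the argument is an induction on $i$ establishing the invariant: for every node $\GA' \Rightarrow \DE'$ of $\GA \Rightarrow \DE, \mathcal{T}$, if $\iota_i(\GA' \Rightarrow \DE') = \GA'_i \Rightarrow \DE'_i$, then $\GA' \subseteq \GA'_i$ and $\DE' \subseteq \DE'_i$. The base case $i = 0$ holds because $\iota_0$ is the identity. For the inductive step I would go through the ten transformations once: steps \ref{wedgeL}, \ref{negR}, \ref{boxL}, \ref{boxLsym}, \ref{boxLself}, and \ref{boxC0} insert a formula into the antecedent of some node, steps \ref{wedgeR}, \ref{negL}, and \ref{boxRself} insert a formula into the succedent of some node, and step \ref{boxR} introduces a new leaf while leaving all previously present nodes untouched. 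Since each operation is purely additive on the formula sets of existing nodes, the inclusions can only be preserved or enlarged, so the invariant carries over. Instantiating at $i = C$ gives $\GA' \subseteq \GA''$ and $\DE' \subseteq \DE''$, which is exactly the lemma.

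The single point needing care---the nearest thing to an obstacle---is the bookkeeping around step \ref{boxR}, where one must be sure that the leaf it introduces is a genuinely \emph{new} element of $S_C$ and never gets conflated with an original node when $\iota_C$ is formed. This is precisely what the convention of Section \ref{TreeB} secures: repeated occurrences of the same sequent are counted as distinct nodes, so tracking an original node through the procedure is unambiguous and $\iota_C$ stays injective. Everything else is routine additive bookkeeping over the ten cases.
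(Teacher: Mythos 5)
Your proposal is correct and takes essentially the same approach as the paper: the paper's own proof is exactly the observation that no step of the procedure ever removes a formula, so every formula present in $\GA_{0} \Rightarrow \DE_{0}, \mathcal{T}_0$ survives into $\GA_C \Rightarrow \DE_C, \mathcal{T}_C$. Your stage-by-stage injections $\iota_i$ and the induction on $i$ simply make explicit the monotonicity (and the node-tracking through step~\ref{boxR}) that the paper leaves implicit.
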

\begin{proof}
All steps do not remove formulas in the composition of $\GA_C \Rightarrow \DE_C, \mathcal{T}_C$. Therefore, all formulas present in $\GA_{0} \Rightarrow \DE_{0}, \mathcal{T}_0$ remain in  $\GA_C \Rightarrow \DE_C, \mathcal{T}_C$.
\end{proof}

\begin{lemma}\label{fuTSMB} 
For all $(\GA' \Rightarrow \DE') \in S_C$, if $A \in \GA'$, then $A$ is true at $ \GA' \Rightarrow \DE' \in S_C$.
 If $A \in \DE'$, then $A$ is false at $ \GA' \Rightarrow \DE' \in S_C$.
\end{lemma}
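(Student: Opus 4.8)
The plan is to prove both statements simultaneously by induction on the complexity of the formula $A$. The base cases and propositional cases are routine; the genuine content lies in the modal cases, where I must exploit the fact that the saturation procedure (steps \ref{wedgeL}--\ref{boxC0}) has been run to completion, so no further step changes $\GA_C \Rightarrow \DE_C, \mathcal{T}_C$. Throughout I write $s = (\GA' \Rightarrow \DE')$ for a node of $S_C$.

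First I would dispatch the base cases. For a propositional variable $p$: if $p \in \GA'$ then $s \in V_C(p)$ by definition, so $p$ is true at $s$; if $p \in \DE'$, then since $\GA_C \Rightarrow \DE_C, \mathcal{T}_C$ is \emph{not} provable, the identity axiom $\SEQ{\| A}{A, \mathcal{T} \|}$ forces $p \notin \GA'$ (otherwise the node would be an axiom), hence $s \notin V_C(p)$ and $p$ is false at $s$. The constants $\top, \bot$ use the axioms $\SEQ{\|}{\top, \mathcal{T}\|}$ and $\SEQ{\|\bot}{\mathcal{T}\|}$ similarly. For $\neg A$ and $A \wedge B$ I invoke the saturation steps \ref{negL}--\ref{wedgeR}: e.g.\ if $A \wedge B \in \GA'$, then step \ref{wedgeL} has already placed $A$ and $B$ in $\GA'$, so the induction hypothesis gives both true at $s$, whence $A \wedge B$ is true; the disjunctive case $A \wedge B \in \DE'$ uses step \ref{wedgeR}, which retained the unprovable branch so that $A \in \DE'$ or $B \in \DE'$, and the induction hypothesis applies. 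The negation cases are handled the same way via steps \ref{negL} and \ref{negR}, using $V_C(\neg A) = V_C(A)^c$.

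The modal cases are the heart of the proof and the main obstacle. Consider $\square^{d}_{\alpha} A \in \GA'$ and suppose $R_C(s,t) = \gamma$ with the truth threshold met ($\alpha \leqq \gamma$ if $d = c$, or $\alpha < \gamma$ if $d = o$); I must show $A$ is true at $t = (\GA'' \Rightarrow \DE'')$, which by the induction hypothesis reduces to showing $A \in \GA''$. The edges of the canonical model arise only from modal brackets (cases I, II of the $R_C$ definition), from the reflexive loop (case III), or are $0$ (case IV); the threshold can only be met in the first three. The reflexive case $t = s$ is covered by step \ref{boxLself} (when $(\alpha,d) \neq (1,o)$) or by the fact that $\square^{o}_1$ carries no constraint since $R_C(s,t) = 1$ forces $t = s$ and the $d=o$ threshold $\alpha < 1$ fails. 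For a genuine bracket edge $[\GA'' \Rightarrow \DE'', \mathcal{T}'']^{d'}_{\beta}$, the delicate point is to verify that the threshold condition on $R_C$ translates exactly into the order condition $(\alpha, d) \preceq (\beta, d')$ demanded by steps \ref{boxL}/\ref{boxLsym}; here I must carefully track the definition of $\prec$ together with the $Suc(\beta)$ shift introduced in case (II) of $R_C$ for open brackets, checking all four combinations of $d, d' \in \{c,o\}$ as in the soundness proof. The key structural fact, guaranteed by the \emph{interpolated} set $\mathbb{U}_C$, is that the fresh intermediate values $\delta$ with $\alpha < \delta < \beta$ separate the thresholds so that no unintended modality fires. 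Granting the order condition, step \ref{boxL} (for $\square^{d}_{\alpha}A \in \GA'$ of the parent) or step \ref{boxLsym} (for the symmetric upward direction) has already inserted $A$ into $\GA''$, giving $A \in \GA''$ as required. The case $\square^{c}_0 A \in \GA'$ is special: since $R_C(s,t) = 0 \geqq 0$ for \emph{every} node $t$, I must show $A \in \GA''$ for \emph{all} nodes, and this is exactly what step \ref{boxC0} guarantees by copying $A$ to the left of every node.

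Finally, for $\square^{d}_{\alpha} A \in \DE'$ I argue that $\square^{d}_{\alpha}A$ is false at $s$ by exhibiting a witness world where $A$ fails. If $\alpha \neq 1$, step \ref{boxR} created a child node $[\Rightarrow A]^{d}_{\alpha}$, i.e.\ a node $t$ with $A \in \DE''$; by the induction hypothesis $A$ is false at $t$, and by construction $R_C(s,t)$ meets the threshold (using case I or II of $R_C$), so $\square^{d}_{\alpha}A$ is false at $s$. The remaining case $\square^{c}_1 A \in \DE'$ uses step \ref{boxRself}: it placed $A \in \DE'$ at the \emph{same} node, and since $R_C(s,s) = 1 \geqq 1$ meets the $d = c$, $\alpha = 1$ threshold, the reflexive world itself witnesses the failure, so $\square^{c}_1 A$ is false at $s$ by the induction hypothesis. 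I expect the bookkeeping of the four $\prec$-cases against the $Suc$-shift to be the only subtle part; everything else follows mechanically from the saturation steps being exhausted.
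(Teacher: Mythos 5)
Your plan coincides with the paper's own proof: induction on the formula, the propositional cases discharged by saturation steps \ref{wedgeL}--\ref{negR} and the axioms, the left-modal cases by a case analysis on which clause (I)--(III) of $R_C$ produced the value $\beta$ (with the interpolated set $\mathbb{U}_C$ used to convert the $Suc$-shifted threshold of clause (II) back into the order condition $(\alpha,d) \preceq (\beta',d')$ required by steps \ref{boxL}/\ref{boxLsym}), the $\square^{c}_{0}$ case from step \ref{boxC0}, and the right-modal cases by the witness nodes created in steps \ref{boxR} and \ref{boxRself}. All of that is correct and is exactly the paper's argument.

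There is, however, one genuine gap: your case split for $\square^{d}_{\alpha} A \in \DE'$ covers only $\alpha \neq 1$ (step \ref{boxR}) and $(\alpha,d) = (1,c)$ (step \ref{boxRself}), and silently omits the case $\square^{o}_{1} A \in \DE'$. This case cannot be handled by your stated strategy of exhibiting a witness world where $A$ fails: $\square^{o}_{1} A$ is true at every world of every {\bf MB}-realization, since no relation exceeds $1$, so no witness can exist; and indeed no saturation step applies to it (step \ref{boxR} explicitly requires $\alpha \neq 1$). The lemma survives only because this case never arises in the saturated sequent: if $\square^{o}_{1} A \in \DE'$, then $\GA_C \Rightarrow \DE_C, \mathcal{T}_C$ would be provable from the axiom $\SEQ{\|}{\square^{o}_{1}A, \mathcal{T}\|}$ by weakenings (wL), (wR), contradicting its unprovability. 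This is precisely the role of that fourth axiom of {\bf NSMB}; the paper disposes of the case in one line (``$\alpha \neq 1$ because of the axiom, (wL), and (wR)''), and your proof needs the same observation to go through, since otherwise the claim being proved is outright false for that formula.
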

\begin{proof}
It is proved by induction on the construction of the formulas in $\GA'$ and $\DE'$.

\begin{itemize}
\item From the definition of $V_C$, the axiom $\| A \Rightarrow A, \mathcal{T}\|$, and the unprovability of $\GA_C \Rightarrow \DE_C, \mathcal{T}_C$,
$(\GA' \Rightarrow \DE') \models p$ if $p \in \GA'$ and $(\GA' \Rightarrow \DE') \not\models p$ if $p \in \DE'$.

\item Suppose $A \wedge B \in \GA'$. From Step \ref{wedgeL}, $A \in \GA'$ and $B \in \GA'$.
From the inductive hypothesis, $(\GA' \Rightarrow \DE') \models A$ and $(\GA' \Rightarrow \DE') \models B$.
Therefore, $(\GA' \Rightarrow \DE') \models A \wedge B$.

\item Suppose $A \wedge B \in \DE'$. From Step \ref{wedgeR},
at least one of $A \in \DE'$ or $B \in \DE'$ is established.
From the inductive hypothesis, $(\GA' \Rightarrow \DE') \not\models A$ or $(\GA' \Rightarrow \DE') \not\models B$.
Therefore, $(\GA' \Rightarrow \DE') \not\models A \wedge B$.

\item Suppose $\neg A \in \GA'$. From Step \ref{negL}, $A \in \DE'$. 
From the inductive hypothesis, $(\GA' \Rightarrow \DE') \not\models A$.
Therefore, $(\GA' \Rightarrow \DE') \models \neg A$.

\item Suppose $\neg A \in \DE'$. From Step \ref{negR}, $A \in \GA'$. 
From the inductive hypothesis, $(\GA' \Rightarrow \DE') \models A$.
Therefore, $(\GA' \Rightarrow \DE') \not\models \neg A$.

\item Suppose $\square^{c}_{\alpha} A \in \GA'$ and $\alpha \neq 0$.

Suppose $R_C((\GA' \Rightarrow \DE'), (\GA'' \Rightarrow \DE'')) = \beta$, and $\alpha \leqq \beta$.
If the reason for $\beta$ is (I), from $(\alpha, c) \preceq (\beta, c)$ and Step \ref{boxL} or \ref{boxLsym}, $A \in \GA''$.
If the reason for $\beta$ is (II), suppose $\beta = Suc(\beta')$.
Then, $(\alpha, c) \preceq (\beta', o)$ is established for the following reason.
If $(\beta', o) \prec (\alpha, c)$, then $Suc(\beta') < \alpha$ because $\alpha, \beta' \in (\GA_C \Rightarrow \DE_C, \mathcal{T}_C)_J$ and from the definitions of $\prec$ and $\mathbb{U}_C$, $\beta' < Suc(\beta') < \alpha$. In this case, $\beta < \alpha$, which is contrary to the assumption.
Therefore, from Step \ref{boxL} or \ref{boxLsym}, $A \in \GA''$.
If the reason for $\beta$ is (III), $\beta = 1$.
From Step \ref{boxLself}, $A \in \GA''$. 
From the inductive hypothesis, $(\GA'' \Rightarrow \DE'') \models A$ holds in all cases.
Therefore, $(\GA' \Rightarrow \DE') \models \square^{c}_{\alpha} A$.

\item Suppose $\square^{c}_{0} A \in \GA'$.

From Step \ref{boxC0}, $A \in \GA''$ for all $(\GA'' \Rightarrow \DE'') \in S_C$.
From the inductive hypothesis, $(\GA'' \Rightarrow \DE') \models A$ for all $(\GA'' \Rightarrow \DE'') \in S_C$. 
Therefore, $(\GA' \Rightarrow \DE') \models \square^{c}_{0} A$.

\item Suppose $\square^{o}_{\alpha} A \in \GA'$.

$\square^{o}_{1} A$ is always true because
there is no relation greater than $1$.

Suppose $\alpha \neq 1$, $R_C((\GA' \Rightarrow \DE'), (\GA'' \Rightarrow \DE'')) = \beta$, $\alpha < \beta$.
If the reason for $\beta$ is (I),
from $(\alpha, o) \preceq (\beta, c)$ and Step \ref{boxL} or \ref{boxLsym}, $A \in \GA''$.
If the reason for $\beta$ is (II), suppose $\beta = Suc(\beta')$.
From $\alpha, \beta' \in (\GA_C \Rightarrow \DE_C, \mathcal{T}_C)_J$, $\alpha < Suc(\beta')$, and the definitions of $\mathbb{U}_C$, $\alpha \leqq \beta'$.
From $(\alpha, o) \preceq (\beta', o)$ and Step \ref{boxL} or \ref{boxLsym}, $A \in \GA''$.
If the reason for $\beta$ is (III), $\beta = 1$.
From Step \ref{boxLself}, $A \in \GA''$. 

From the inductive hypothesis, $(\GA'' \Rightarrow \DE'') \models A$ holds in all cases.
Therefore, $(\GA' \Rightarrow \DE') \models \square^{o}_{\alpha} A$.


\item Suppose $\square^{c}_{\alpha} A \in \DE'$.

If $\alpha = 1$, from Step \ref{boxRself}, $A \in \DE'$.
From the inductive hypothesis, $(\GA' \Rightarrow \DE') \not\models A$.
If $\alpha \neq 1$, from Step \ref{boxR} and the definition of $R_C$, there exists $(\GA'' \Rightarrow \DE'') \in S_C$ such that $R_C((\GA' \Rightarrow \DE'), (\GA'' \Rightarrow \DE'')) = \alpha$ and $A \in \DE''$.
From the inductive hypothesis, $(\GA'' \Rightarrow \DE'') \not\models A$.
Therefore, $(\GA' \Rightarrow \DE') \not\models \square^{c}_{\alpha} A$.

\item Suppose $\square^{o}_{\alpha} A \in \DE'$.
$\alpha \neq 1$ because of the axiom, (wL), and (wR).
From Step \ref{boxR} and the definition of $R_C$, there exists $(\GA'' \Rightarrow \DE'') \in S_C$ such that $R_C((\GA' \Rightarrow \DE'), (\GA'' \Rightarrow \DE'')) = Suc(\alpha)$ and $A \in \DE''$.
From the inductive hypothesis, $(\GA'' \Rightarrow \DE'') \not\models A$.
Therefore, $(\GA' \Rightarrow \DE') \not\models \square^{o}_{\alpha} A$.
\end{itemize}
\end{proof}

\begin{lemma}\label{falseC} $\GA \Rightarrow \DE, \mathcal{T}$ is false in  $(S_C,R_C, P_C, V_C)$ under $\mathcal{E}_C$.
\end{lemma}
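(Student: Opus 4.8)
The plan is to reduce the statement directly to the two lemmas just established, by unwinding the definition of falsity. Recall that $\GA \Rightarrow \DE, \mathcal{T}$ is false in a realization under an embedding $\mathcal{E}$ exactly when, for \emph{every} node $\GA' \Rightarrow \DE'$ occurring in it, all formulas of $\GA'$ are true at $\mathcal{E}(\GA' \Rightarrow \DE')$ and all formulas of $\DE'$ are false there. So I would fix an arbitrary node $\GA' \Rightarrow \DE'$ of the original nested-sequent $\GA \Rightarrow \DE, \mathcal{T}$ and set $\GA'' \Rightarrow \DE'' = \mathcal{E}_C(\GA' \Rightarrow \DE')$, which by the construction of $\mathcal{E}_C$ is the ``same'' node of $\GA_C \Rightarrow \DE_C, \mathcal{T}_C$ and hence an element of $S_C$.

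The core of the argument is then a two-line composition of the preceding lemmas, carried out separately for the two sides of the sequent. For the antecedent: if $A \in \GA'$, then Lemma~\ref{incS} gives $A \in \GA''$ (the saturation procedure never deletes formulas), and Lemma~\ref{fuTSMB} applied to the node $\GA'' \Rightarrow \DE'' \in S_C$ yields that $A$ is true at $\GA'' \Rightarrow \DE''$, i.e. true at $\mathcal{E}_C(\GA' \Rightarrow \DE')$. Symmetrically, for the succedent: if $A \in \DE'$, then Lemma~\ref{incS} gives $A \in \DE''$, and Lemma~\ref{fuTSMB} gives that $A$ is false at $\mathcal{E}_C(\GA' \Rightarrow \DE')$. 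Since the node $\GA' \Rightarrow \DE'$ was arbitrary, the falsity condition holds at every node simultaneously, which is precisely the definition of $\GA \Rightarrow \DE, \mathcal{T}$ being false under $\mathcal{E}_C$.

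I do not expect a genuine obstacle here, since all the real work has already been discharged in Lemmas~\ref{incS} and~\ref{fuTSMB}; this lemma is the bookkeeping step that packages them into the desired falsity statement. The one point I would be careful about is the domain mismatch: the truth lemma Lemma~\ref{fuTSMB} speaks about the nodes of the \emph{saturated} sequent (the elements of $S_C$), whereas falsity of $\GA \Rightarrow \DE, \mathcal{T}$ is a condition on the nodes of the \emph{original} sequent. The bridge is exactly $\mathcal{E}_C$ together with Lemma~\ref{incS}, so I would make sure to state explicitly that each image node lies in $S_C$ (so that Lemma~\ref{fuTSMB} applies) and that $\mathcal{E}_C$ is a well-defined embedding (established immediately before the lemma), so that ``false under $\mathcal{E}_C$'' is meaningful.
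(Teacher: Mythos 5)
Your proposal is correct and follows exactly the paper's route: the paper proves this lemma as an immediate corollary of Lemma~\ref{incS} and Lemma~\ref{fuTSMB}, which is precisely the composition you carry out (with the welcome extra care about the domain mismatch between the original and saturated sequents).
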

\begin{proof}
The corollary of Lemma \ref{incS} and Lemma \ref{fuTSMB}.
\end{proof}

\begin{theorem}[Completeness theorem for {\bf NSMB}]\label{compTSMB} If $\GA \Rightarrow \DE, \mathcal{T}$ is valid, then $\GA \Rightarrow \DE, \mathcal{T}$ is provable in {\bf NSMB}.
\end{theorem}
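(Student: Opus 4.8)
The plan is to prove the contrapositive, which is exactly what the entire development preceding this theorem has been building toward. Assume $\GA \Rightarrow \DE, \mathcal{T}$ is not provable in \textbf{NSMB}; I must then exhibit an \textbf{MB}-realization and an embedding under which it is false. Since this will contradict validity, the theorem follows. The construction has already been laid out: apply the saturation procedure (Steps \ref{wedgeL}--\ref{boxC0}) to the unprovable nested-sequent to obtain a saturated, still-unprovable nested-sequent $\GA_C \Rightarrow \DE_C, \mathcal{T}_C$; fix an interpolated set $\mathbb{U}_C$ of $(\GA_C \Rightarrow \DE_C, \mathcal{T}_C)_J$; and build the canonical model $(S_C, R_C, P_C, V_C)$ from it.

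The key steps, in order, are then simply to invoke the lemmas already established. First, by Lemma \ref{TSMBreal}, $(S_C, R_C, P_C, V_C)$ is a genuine \textbf{MB}-realization, so it is a legitimate structure in which to evaluate validity. Second, the map $\mathcal{E}_C$ sending each original node to its ``same'' node in $\GA_C \Rightarrow \DE_C, \mathcal{T}_C$ is a well-defined embedding of $\GA \Rightarrow \DE, \mathcal{T}$ into this realization, as argued in the paragraph preceding Lemma \ref{incS}. Third, and most importantly, Lemma \ref{falseC} states precisely that $\GA \Rightarrow \DE, \mathcal{T}$ is false in $(S_C, R_C, P_C, V_C)$ under $\mathcal{E}_C$. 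Thus a falsifying realization-with-embedding exists, so $\GA \Rightarrow \DE, \mathcal{T}$ is not valid, completing the contrapositive and hence the theorem.

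The genuinely hard part of the argument is not in this final assembly but has been discharged earlier, in two places. The first is ensuring the saturation procedure terminates: Step \ref{boxR} can spawn new nodes, and one must verify (as the paper does via the bullet points on finiteness) that only subformulas of the original formulas ever appear, so node creation is bounded and the process halts—this is exactly the subtlety whose mishandling produced the error in \cite{extendQL} that the paper set out to repair. The second, and the real crux, is the truth lemma (Lemma \ref{fuTSMB}), whose delicate cases are the modal ones: establishing that $\square^{c}_{\alpha} A \in \GA'$ or $\square^{o}_{\alpha} A \in \GA'$ forces $A$ into the appropriate right-hand nodes requires the careful interplay between the order $\prec$, the successor function $Suc$ on the interpolated set $\mathbb{U}_C$, and the case analysis on why $R_C$ takes a given value (reasons (I)--(III)). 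The interpolated set is precisely the device that guarantees distinct modalities do not interfere, which is what makes the finite canonical model work where the naive subformula-closed construction failed. Given these lemmas, however, the proof of Theorem \ref{compTSMB} itself is a one-line corollary, and I would write it as such.
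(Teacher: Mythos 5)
Your proposal is correct and follows exactly the paper's own route: the contrapositive argument that assembles the saturation procedure, the canonical model $(S_C,R_C,P_C,V_C)$ via Lemma \ref{TSMBreal}, the embedding $\mathcal{E}_C$, and Lemma \ref{falseC} to produce a falsifying realization. The paper likewise treats the theorem itself as a one-line corollary of Lemma \ref{falseC}, with all the substantive work (termination of saturation and the truth lemma, Lemma \ref{fuTSMB}) discharged beforehand, just as you describe.
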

\begin{proof}
From Lemma \ref{falseC}, if $\GA \Rightarrow \DE, \mathcal{T}$ is not provable in {\bf NSMB}, there exists an {\bf MB}-realization

\noi
$(S_C,R_C,P_C,V_C)$ and an embedding $\mathcal{E}_C$
such that $\GA \Rightarrow \DE, \mathcal{T}$ is false under $\mathcal{E}_C$.
\end{proof}

\begin{theorem}[Cut-elimination theorem for {\bf NSMB}]\label{cutTSMB} If $\GA \Rightarrow \DE, \mathcal{T}$ is provable in {\bf NSMB}, there exists a proof of $\GA \Rightarrow \DE, \mathcal{T}$ that does not include the rule (cut).
\end{theorem}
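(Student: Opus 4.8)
The plan is to prove cut-elimination \emph{semantically}, exploiting the fact that the completeness argument already carried out never invokes the rule (cut). First I would observe that every justification occurring in the saturation procedure that precedes Lemma~\ref{falseC} --- each clause of the form ``this new nested-sequent is also not provable because of the rule (X)'' in Steps \ref{wedgeL}--\ref{boxC0} --- appeals to one of the cut-free rules ($\wedge$L), ($\wedge$R), ($\neg$L), ($\neg$R), ($\square$L), ($\square$L sym), ($\square$L self), ($\square$R), ($\square$R self), or ($\square^c_0$), and never to (cut). Likewise, the only base cases used in Lemma~\ref{fuTSMB} are the four axiom schemes, which are themselves cut-free. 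Consequently the entire construction of the canonical model $(S_C,R_C,P_C,V_C)$, together with Lemmas~\ref{TSMBreal}--\ref{falseC}, goes through verbatim when ``provable'' is read throughout as ``provable \emph{without} (cut)'': if the saturated $\GA_C \Rightarrow \DE_C, \mathcal{T}_C$ were cut-free provable, then applying the relevant cut-free rule would make the earlier nested-sequent cut-free provable, so cut-free unprovability is preserved along the procedure.

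Next I would record the resulting \emph{cut-free completeness} statement. If $\GA \Rightarrow \DE, \mathcal{T}$ is not provable without (cut), then by the reinterpreted procedure and Lemma~\ref{falseC} there exist an {\bf MB}-realization $(S_C,R_C,P_C,V_C)$ and an embedding $\mathcal{E}_C$ under which $\GA \Rightarrow \DE, \mathcal{T}$ is false; hence $\GA \Rightarrow \DE, \mathcal{T}$ is not valid. Taking the contrapositive, every valid nested-sequent admits a proof that avoids (cut).

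Finally I would close the loop using soundness. Suppose $\GA \Rightarrow \DE, \mathcal{T}$ is provable in {\bf NSMB}, possibly using (cut). By the Soundness Theorem~\ref{soTSMB} it is then valid, and by the cut-free completeness just obtained it is provable without (cut). This yields exactly the assertion of the theorem, namely that any provable nested-sequent has a cut-free proof.

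The main obstacle I anticipate is one of careful bookkeeping rather than of genuine mathematical depth: I must check honestly that the completeness development is cut-free in every detail --- that the four axiom schemes are the sole base cases, that the termination argument for the saturation procedure does not covertly rely on (cut), and that each Step's contrapositive really is an instance of the named cut-free rule with its side conditions, such as $(\alpha,d)\preceq(\beta,d')$ and $(\alpha,d)\neq(1,o)$, left intact. Once this verification is complete, the usual delicate syntactic work --- induction on cut-rank and permutation of cuts upward over the modal and, especially, the symmetry-type rules ($\square$L sym) and ($\square^c_0$), which is typically the hard part for {\bf B}-like symmetric logics --- is avoided entirely.
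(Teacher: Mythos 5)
Your proposal is correct and is precisely the paper's own argument: the paper proves cut-elimination semantically by noting that the completeness proof (the saturation procedure and canonical model of Lemmas \ref{TSMBreal}--\ref{falseC}) never invokes (cut), so validity implies cut-free provability, which combined with Soundness (Theorem \ref{soTSMB}) yields the result. You have simply made explicit the bookkeeping (checking each Step appeals only to a cut-free rule) that the paper leaves implicit in its two-sentence proof.
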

\begin{proof}
The completeness theorem is proved without the rule (cut).
Therefore, the provability of a nested-sequent in {\bf NSMB}
does not depend on whether {\bf NSMB} contains (cut). 
\end{proof}

The construction of a canonical model stops within a finite number of steps.
The discussion does not change in essence if $J$ (and $\mathbb{U}$) is replaced by a suitable total ordered finite set instead of a set of real numbers.
Therefore, comparing $(\alpha, d)$ and $(\beta, d')$ can also be completed in a finite number of steps.

\begin{theorem}[Finite model property for {\bf MB}]\label{finiteMB} If $A$ is not valid, there exists an {\bf MB}-realization $(S,R,P,V)$ such that $S$ is a finite set and $A$ is not valid in it.
\end{theorem}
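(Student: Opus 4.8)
The plan is to instantiate the canonical-model construction of the completeness proof with the formula $A$ and then read off that the world set it produces is finite.

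First I would convert the non-validity of $A$ into the non-provability of a nested-sequent. Take the one-node nested-sequent $\Rightarrow A$, i.e.\ $\emptyset \Rightarrow \{A\}$ with no subtrees. An embedding of $\Rightarrow A$ imposes no bracket conditions (there are no brackets), so it is merely a choice of a world $s = \mathcal{E}(\Rightarrow A) \in S$, and $\Rightarrow A$ is false under $\mathcal{E}$ exactly when $A$ is false at $s$. Hence $\Rightarrow A$ is valid in a given realization iff $A$ is true at every world there, and $\Rightarrow A$ is valid iff $A$ is valid. Since $A$ is assumed not valid, $\Rightarrow A$ is not valid, and by the contrapositive of the Soundness theorem (Theorem \ref{soTSMB}) the nested-sequent $\Rightarrow A$ is not provable in {\bf NSMB}.

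Next I would run the saturation procedure and the canonical-model construction of the completeness proof on $\Rightarrow A$, taking $\GA \Rightarrow \DE, \mathcal{T} := (\Rightarrow A)$. This yields a saturated, still unprovable nested-sequent $\GA_C \Rightarrow \DE_C, \mathcal{T}_C$, a canonical model $(S_C, R_C, P_C, V_C)$, and an embedding $\mathcal{E}_C$. By Lemma \ref{falseC}, $\Rightarrow A$ is false in $(S_C, R_C, P_C, V_C)$ under $\mathcal{E}_C$; since $A$ sits in the succedent of the root, this means $A$ is false at the world $\mathcal{E}_C(\Rightarrow A) \in S_C$, so $A$ is not valid in $(S_C, R_C, P_C, V_C)$.

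It remains to check the one substantive point, namely that $S_C$ is finite. Here I would appeal to the termination analysis already established: each step other than the node-creating Step \ref{boxR} strictly lowers formula complexity, all formulas occurring throughout are subformulas of $A$, and Step \ref{boxR} is applied at most once per occurrence of a boxed formula in a succedent; hence the procedure halts after finitely many steps and $\GA_C \Rightarrow \DE_C, \mathcal{T}_C$ is a finite tree. Since $S_C = (\GA_C \Rightarrow \DE_C, \mathcal{T}_C)_N$ is its finite set of nodes, $S_C$ is finite. The accessibility relation $R_C$ takes values only in the finite interpolated set $\mathbb{U}_C$ together with $0$ and $1$ (as the remark preceding the statement notes, the reals may be replaced by a suitable finite totally ordered set), so the realization is genuinely finite. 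Combining this with the previous paragraph produces a finite MB-realization in which $A$ is not valid, which is the claim. The hard part here is not any new estimate but simply recognizing that the completeness construction already outputs a finite model; all the real work was done when the saturation procedure was shown to terminate.
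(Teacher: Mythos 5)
Your proposal is correct and takes essentially the same approach as the paper: the paper's own (one-line) proof likewise applies the canonical-model construction from the completeness proof to the nested-sequent $\Rightarrow A$ and observes that the resulting model is finite. You merely spell out the steps the paper leaves implicit, namely the reduction of non-validity of $A$ to non-provability of $\Rightarrow A$ via Theorem \ref{soTSMB}, the falsity of $\Rightarrow A$ via Lemma \ref{falseC}, and the finiteness of $S_C$ from the termination of the saturation procedure.
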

\begin{proof}
If $A$ is not valid, the above method could construct a finite canonical model of nested-sequent $ \Rightarrow A$. 
\end{proof}

\begin{theorem}\label{dTSMB} 
The validity problem for {\bf MB} is decidable.
\end{theorem}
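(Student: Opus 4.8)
The plan is to reduce the validity problem to a terminating proof-search procedure in the cut-free calculus {\bf NSMB}, and to invoke the three pillars already established in this section: the cut-elimination theorem (Theorem~\ref{cutTSMB}), the soundness and completeness theorems (Theorem~\ref{soTSMB} and Theorem~\ref{compTSMB}), and the finiteness of the model-construction procedure used for completeness.

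First I would observe that, by Theorem~\ref{soTSMB} and Theorem~\ref{compTSMB}, a formula $A$ is valid if and only if the nested-sequent $\Rightarrow A$ is provable in {\bf NSMB}, and by Theorem~\ref{cutTSMB} this provability can be witnessed by a cut-free proof. So it suffices to exhibit an algorithm that, given $A$, decides whether $\Rightarrow A$ has a cut-free proof. The key point is that the saturation procedure used to build the canonical model is exactly a proof-search procedure read backwards: each of the steps \ref{wedgeL}--\ref{boxC0} corresponds to the inverse application of a logical or modal rule, and the discussion already notes that this procedure halts after finitely many steps. Concretely, every step either decreases the total complexity of the formulas or (in the case of step~\ref{boxR}) adds a single new node indexed by a subformula of $A$, so the number of nodes is bounded and only subformulas of $A$ ever appear.

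Next I would turn this bounded saturation into a decision procedure. Since $J$ (together with the interpolated set $\mathbb{U}_C$) may be replaced by a suitable finite totally ordered set as remarked after Theorem~\ref{cutTSMB}, every comparison $(\alpha,d)\preceq(\beta,d')$ is decidable in finitely many steps, and the backward application of each rule is effective. One runs the saturation to termination on $\Rightarrow A$. If during the search an axiom is reached on every branch (equivalently, if the saturated nested-sequent $\GA_C \Rightarrow \DE_C, \mathcal{T}_C$ turns out to be provable, i.e.\ contains a clash witnessing an axiom), then $\Rightarrow A$ is provable and hence $A$ is valid. Otherwise the saturation yields an unprovable saturated nested-sequent, from which Lemma~\ref{falseC} produces a finite countermodel, so $A$ is not valid. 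Because the procedure always terminates and returns one of these two verdicts, it decides validity.

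The main obstacle to watch is termination in the presence of the branching and node-creating rules rather than the routine effectivity of the rules themselves. Step~\ref{wedgeR} forces a choice between two unprovable premises, and step~\ref{boxR} creates fresh nodes; one must argue that neither causes nontermination. For the former, the branching is finite and bounded by the number of conjunctions in $\DE'$; for the latter, the crucial invariant is that step~\ref{boxR} fires at most once per occurrence of a $\square^{d}_{\alpha}A$ in the right-hand side and introduces only subformulas of the original $A$, so the tree cannot grow without bound. Thus the search tree is finitely branching and of bounded depth, and by König's lemma the whole procedure terminates, which is precisely what turns completeness-via-finite-canonical-model into an actual decision algorithm.
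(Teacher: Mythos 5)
Your proposal is correct, but it takes a formally different route from the paper, whose entire proof of this theorem is the single line ``corollary of Theorem~\ref{finiteMB}'' (the finite model property), backed by the preceding remarks that the canonical-model construction halts in finitely many steps and that the comparisons $(\alpha,d)\preceq(\beta,d')$ can be carried out after replacing $J$ and $\mathbb{U}$ by a finite totally ordered set. You instead convert that same saturation procedure into an explicit terminating backward proof-search algorithm: branch over both premises at step~\ref{wedgeR} (rather than consulting an unprovability oracle, which is what the paper's construction as literally stated requires), declare provability when every branch closes on an axiom, and apply Lemma~\ref{falseC} to any open saturated branch to extract a finite countermodel. This buys something real: the bare inference ``FMP implies decidability'' has hidden effectiveness content --- one cannot naively enumerate finite {\bf MB}-realizations, since $R$ is $[0,1]$-valued and so even over a finite $S$ there are uncountably many, nor does FMP by itself give a computable bound --- and your direct procedure supplies exactly the content that the paper leaves implicit in the remarks preceding Theorem~\ref{finiteMB}. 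Two minor points: K\"onig's lemma is not needed, since a finitely branching tree all of whose branches have bounded depth is outright finite; and your identification of ``the saturated sequent contains an axiom on every branch'' with provability deserves one more sentence, namely that each saturation step is the inverse of a rule of {\bf NSMB}, so a fully closed search tree literally is a cut-free proof, while an open saturated branch yields a countermodel and hence, by Theorem~\ref{soTSMB}, unprovability. Both gaps are cosmetic rather than substantive.
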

\begin{proof}
The corollary of Theorem \ref{finiteMB}.
\end{proof}

\section{Nested-sequent calculus NSMB+}\label{MLMB+}

From a multi-relational frame point of view, $R$ in a {\bf MB}-realization is regarded as a set of binary relations with the {\it conditions} such as ``If there is a relation $\alpha$ from $s$ to $t$, then there is no relation $\beta$ ($\beta \neq \alpha$) from $s$ to $t$.''
In general, those binary relations are defined independently.
Some ingenuity is required to handle these conditions using formulas.
For example, the condition ``If there is a relation $R'$ from $s$ to $t$, then there is no other relation $R''$ from $s$ to $t$'' cannot be defined as a formula in standard modal logic. (Here, ``define'' has the same meaning as, for example, 
$\square p \to \square \square p$ defines the transitivity of a binary relation in a frame of modal logic.)
If the conditions of a frame cannot be defined as a formula, some problems may occur when proving the completeness theorem in a Hilbert-style system or a standard sequent system (see \cite{BlackBML} for these problems).
This problem does not occur in {\bf MB} because it only handles relational operators $\square^{c}_{\alpha}$ and $\square^{o}_{\alpha}$.
That is, the following ``normal'' modal symbols that correspond to only one modality are not included in {\bf MB} (other than $\square^{c}_{1} $).

\begin{description}
\item $V( \square^{=}_{\alpha} A)   = \{ s \in S  |$ for all $t \in S$, if $\alpha = R (s, t)$, then $t \in V(A) $ $\}$.
\end{description}

Relational operators make it simple to construct the canonical model. 
By employing only the maximum value among the numbers that satisfy a specific condition as a binary relation, we can have only one binary relation between any two possible worlds in the canonical model. (See \cite{extendQL} for concrete definitions. As mentioned briefly in the introduction, the completeness theorem of the Hilbert style system in \cite{extendQL} can be proved with this method if the infinite canonical model is acceptable.)
However, the above issue arises in a Hilbert-style system or a standard sequent system if $\square^{=}_{\alpha}$ is added to the language. Therefore, developed sequent becomes intrinsically important to adding $\square^{=}_{\alpha}$.

Adding $\square^{=}_{\alpha} A$ to the language of {\bf MB} and constructing a new logic is essential from both a physics and mathematical logic point of view since it broadens the range of expression.
Because $V(\square^{c}_{\alpha}A) = V(\square^{o}_{\alpha} A \wedge \square^{=}_{\alpha} A)$ holds, $\square^{c}_{\alpha}$ can be represented by $\square^{o}_{\alpha}$ and $\square^{=}_{\alpha}$, but $\square^{c}_{\alpha}$ and $\square^{o}_{\alpha}$ cannot represent $\square^{=}_{\alpha}$.
Therefore, it is desirable to define $\square^{c}_{\alpha}A$ as an abbreviation of $\square^{o}_{\alpha} A \wedge \square^{=}_{\alpha} A$ rather than a primitive formula.

Because $\square^{c}_{0}$ is a universal modality,
it is not directly related to $0$-relation, but $0$-relation is relevant to $\square^{=}_{0}$.
The definition (IV) of $R_C$ is inappropriate for $\square^{=}_{0}$ because (IV) is defined independently of occurrence of $\square^{=}_{0}A$ in the nested-sequent. Therefore, the truth of $\square^{=}_{0}A$ in the canonical model changes from intention, and the
proof of the completeness theorem fails. (Even if we add the concept of $0$-relation to embedding, the soundness of ($\square$R) will not be satisfied this time. It is currently unclear how this problem can be resolved if $\square^{=}_{0}$ is added.)
Therefore, we define the formulas of new logic {\bf MB+}
by removing all $\square^{c}_{\alpha}A \  (\alpha \neq 0)$ from the formulas of {\bf MB} and adding all $\square^{=}_{\alpha}A \  (0 < \alpha \leqq 1)$.

Basic definitions for {\bf MB+} are constructed as follows (but we only briefly describe the differences from the {\bf MB} case).
The relational symbols $^{d}_{\alpha}$ used in the modal symbols and the brackets 
in nested-sequent are $^{=}_{\alpha} (0 < \alpha \leqq 1)$, $^{o}_{\alpha} (0 \leqq \alpha \leqq 1)$, and $^{c}_{0}$.
The definition of embedding is changed by 
adding the following condition:

\begin{description}

\item If $(\GA_1 \Rightarrow \DE_1, [\GA_2 \Rightarrow \DE_2, \mathcal{T'}]^{=}_{\alpha}) \triangleleft (\GA \Rightarrow \DE, \mathcal{T})$ and $R((\mathcal{E}(\GA_1 \Rightarrow \DE_1),(\mathcal{E}(\GA_2 \Rightarrow \DE_2)) = \beta$, then $ \alpha = \beta$.


\end{description}

{\bf NSMB+} is defined by changing {\bf NSMB} as follows:

\begin{description}

\item

1. ($\square$ R self) is removed, and the following rule is added.

$$
\infer[\mbox{(= R self) }]{\SEQ{\| \GA}{\DE, \square^=_1 A, \mathcal{T} \|}}{
 \SEQ{\| \GA}{\DE, A, \mathcal{T} \|}}
\hspace{5mm}
$$

\item
2. The conditions  (1) and (2) in the annotation of {\bf NSMB} are changed as follows:

\begin{description}
\item (1) $d$ and $d'$ are $=$, and $\alpha = \beta$, or

$d$ and $d'$ are $o$, and $\alpha < \beta$, or

$d$ is $o$, $d'$ is $=$, and $\alpha < \beta$.  
\item (2) $d'$ is $=$ and $\alpha =1$, or

$d$ is $o$ and $\alpha \neq 1$.

\end{description}

\end{description}

\begin{theorem}[Soundness theorem for {\bf NSMB+}]\label{soTSMB+} If $\GA \Rightarrow \DE, \mathcal{T}$ is provable in {\bf NSMB+}, then
$\GA \Rightarrow \DE, \mathcal{T}$ is valid. 
\end{theorem}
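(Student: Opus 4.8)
The plan is to argue exactly as in Theorem~\ref{soTSMB}, by induction on the construction of the proof of $\GA \Rightarrow \DE, \mathcal{T}$ in {\bf NSMB+}, showing at each step that if the conclusion is false in $(S,R,P,V)$ under an embedding $\mathcal{E}$, then some premise is false under a suitable embedding $\mathcal{E}'$. The only semantic ingredients that have changed are the set of admissible modal brackets and the embedding clauses governing them, so the cases for the propositional rules, for (wL), (wR), and for ($\square^c_0$) are literally the same as before: the argument for ($\square^c_0$) uses only that $\square^c_0$ is the universal modality ($\square^c_0 A$ is true at $s$ iff $A$ is true at every world, since $0 \leqq R(s,t)$ always holds), which is unaffected by the change of language. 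I would therefore concentrate on the modal rules whose side conditions (1) and (2) were restated, namely ($\square$ L), ($\square$ L sym), ($\square$ L self), ($\square$ R), together with the brand-new rule (= R self).

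For ($\square$ L) and ($\square$ L sym) I would assume the conclusion false under $\mathcal{E}$, so that $\square^d_\alpha A$ is true at $\mathcal{E}$ of the parent node, and then split along the three shapes permitted by the new condition~(1). If $d = d' = \; =$ and $\alpha = \beta$, the new equality embedding clause forces $R(\mathcal{E}(\text{parent}), \mathcal{E}(\text{child})) = \beta = \alpha$, so truth of $\square^=_\alpha A$ at the parent yields $A$ true at the child. If $d = d' = o$ and $\alpha < \beta$, the $o$-clause gives $\beta < R$, hence $\alpha < R$. If $d = o$, $d' = \; =$ and $\alpha < \beta$, the $=$-clause gives $R = \beta$, hence again $\alpha < R$; in both of these cases truth of $\square^o_\alpha A$ delivers $A$ at the child. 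In every case one modifies $\mathcal{E}$ into $\mathcal{E}'$ exactly as in Theorem~\ref{soTSMB}, moving $A$ into the child (or parent) node while keeping every value of $\mathcal{E}$ fixed, so falsity is preserved. For ($\square$ L self) I would use reflexivity $R(s,s) = 1$: the side condition~(2) admits only $\square^=_1$, where $R(s,s) = 1$ makes $A$ true at $s$, and $\square^o_\alpha$ with $\alpha \neq 1$, where $\alpha < 1 = R(s,s)$ again forces $A$ at $s$.

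For ($\square$ R) I would argue as before: if the conclusion is false at $s$ then $\square^d_\alpha A$ is false there, so there is a witness $t$ with $A$ false at $t$ and $R(s,t) = \alpha$ (when $d = \; =$) or $\alpha < R(s,t)$ (when $d = o$); I extend the embedding by sending the new leaf $\Rightarrow A$ to $t$. The only new point is that when $d = \; =$ the fresh bracket $[\ ]^=_\alpha$ imposes the equality embedding clause, which is met precisely because the witness satisfies $R(s,t) = \alpha$. Finally, the new rule (= R self) is sound because $\square^=_1 A$ is equivalent to $A$: by reflexivity the only $t$ with $R(s,t) = 1$ is $t = s$, so $\square^=_1 A$ is false at $s$ iff $A$ is false at $s$.

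The main obstacle — really the only place where genuine care is needed — is the mixed case $d = o$, $d' = \; =$ in the left rules, where the child bracket pins the relation to the exact value $R = \beta$ while the premise modality $\square^o_\alpha$ requires the \emph{strict} inequality $\alpha < R$; this is exactly why condition~(1) demands $\alpha < \beta$ rather than $\alpha \leqq \beta$ here, and the soundness argument must invoke that strictness. The symmetric concern for ($\square$ R) is to check that the embedding $\mathcal{E}'$ built from a witness of an $=$-modality genuinely satisfies the new equality clause and does not violate any inherited $o$-clause for the untouched brackets; since $\mathcal{E}'$ coincides with $\mathcal{E}$ on all old nodes and only the single new leaf is newly constrained, this is immediate.
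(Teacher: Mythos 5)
Your proposal is correct and takes essentially the same approach as the paper: the paper's own proof of Theorem~\ref{soTSMB+} simply says it is ``almost the same as Theorem~\ref{soTSMB}'', i.e., induction on the proof, propagating falsity from conclusion to premise by adjusting the embedding, which is exactly what you do. Your writeup in fact supplies the case analysis the paper leaves implicit (the three shapes of condition~(1), the reflexivity argument for condition~(2) and (= R self), and the witness construction for ($\square$ R) with an $=$-bracket), and each of these cases is argued correctly.
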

\begin{proof}
Almost the same as Theorem \ref{soTSMB}.
\end{proof}

Some procedure for the composition of $\GA_C \Rightarrow \DE_C, \mathcal{T}_C$ is modified as follows:

\begin{enumerate}
\setcounter{enumi}{4}
\item\label{=L}
If $\| \GA' \Rightarrow \DE', [\GA'' \Rightarrow \DE'', \mathcal{T}'']^{d'}_{\beta}, \mathcal{T}' \| = \GA_{i} \Rightarrow \DE_{i}, \mathcal{T}_{i}$, $\alpha, \beta, d$ and $d'$ satisfy condition (1) of {\bf NSMB+}, and $\square^{d}_{\alpha} A \in \GA'$, then we construct $\GA_{i+1} \Rightarrow \DE_{i+1}, \mathcal{T}_{i+1} = \| \GA' \Rightarrow \DE', [A, \GA'' \Rightarrow \DE'', \mathcal{T}'']^{d'}_{\beta}, \mathcal{T}' \|$.  
This new nested-sequent is also not provable because of the rule ($\square$L).

\item\label{=Lsym}
If $\| \GA' \Rightarrow \DE', [\GA'' \Rightarrow \DE'', \mathcal{T}'']^{d'}_{\beta}, \mathcal{T}' \| = \GA_{i} \Rightarrow \DE_{i}, \mathcal{T}_{i}$, $\alpha, \beta, d$ and $d'$ satisfy condition (1) of {\bf NSMB+}, and $\square^{d}_{\alpha} A \in \GA''$, then we construct $\GA_{i+1} \Rightarrow \DE_{i+1}, \mathcal{T}_{i+1} = \| A, \GA' \Rightarrow \DE', [\GA'' \Rightarrow \DE'', \mathcal{T}'']^{d'}_{\beta}, \mathcal{T}' \|$.  
This new nested-sequent is also not provable because of the rule ($\square$ L sym).

\item\label{=Lself}
If $\| \GA' \Rightarrow \DE', \mathcal{T}' \| = \GA_{i} \Rightarrow \DE_{i}, \mathcal{T}_{i}$, $\alpha$ and $d$ satisfy condition (2) of {\bf NSMB+}, and $\square^{d}_{\alpha} A \in \GA'$, then we construct $\GA_{i+1} \Rightarrow \DE_{i+1}, \mathcal{T}_{i+1} = \| A, \GA' \Rightarrow \DE', \mathcal{T}' \|$.  
This new nested-sequent is also not provable because of the rule ($\square$ L self).

\setcounter{enumi}{8}
\item\label{=Rself}
If $\| \GA' \Rightarrow \DE', \mathcal{T}' \| = \GA_{i} \Rightarrow \DE_{i}, \mathcal{T}_{i}$, and $\square^{=}_{1} A \in \DE'$, then we construct $\GA_{i+1} \Rightarrow \DE_{i+1}, \mathcal{T}_{i+1} = \| \GA' \Rightarrow \DE', A, \mathcal{T}' \|$.  
This new nested-sequent is also not provable because of the rule (=R self).


\end{enumerate}

For the definition of $R_C$ of the canonical model, the following (I)' is added.

(I)' If $\GA' \Rightarrow \DE', [\GA'' \Rightarrow \DE'', \mathcal{T}'']^{=}_{\beta}, \mathcal{T}' \triangleleft \GA_C \Rightarrow \DE_C, \mathcal{T}_C$, then $R_C((\GA' \Rightarrow \DE'), (\GA'' \Rightarrow \DE'')) \stackrel{\mathrm{def}}{=} \beta$.


\begin{theorem}[Completeness theorem for {\bf NSMB+}]\label{compTSMB+} If $\GA \Rightarrow \DE, \mathcal{T}$ is valid, then $\GA \Rightarrow \DE, \mathcal{T}$ is provable in {\bf NSMB+}.
\end{theorem}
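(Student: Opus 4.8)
The plan is to mirror the completeness proof for \textbf{NSMB} (Theorem \ref{compTSMB}) step by step, adapting each ingredient to the modified relational symbols $^{=}_{\alpha}$, $^{o}_{\alpha}$, and $^{c}_{0}$ of \textbf{MB+}. As before, I would prove the contrapositive: assuming $\GA \Rightarrow \DE, \mathcal{T}$ is not provable in \textbf{NSMB+}, I construct a finite \textbf{MB}-realization with an embedding under which $\GA \Rightarrow \DE, \mathcal{T}$ is false. First I would run the saturation procedure using the modified steps \ref{=L}, \ref{=Lsym}, \ref{=Lself}, and \ref{=Rself} (together with the unchanged propositional steps \ref{wedgeL}--\ref{negR} and the box-right steps) to produce a non-provable saturated nested-sequent $\GA_C \Rightarrow \DE_C, \mathcal{T}_C$. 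The termination argument is unchanged in essence: every step decreases formula complexity except the node-creating step \ref{boxR}, which fires at most once per subformula occurrence, so only finitely many nodes are added.

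Next I would fix an interpolated set $\mathbb{U}_C$ of $(\GA_C \Rightarrow \DE_C, \mathcal{T}_C)_J$ and define the canonical model $(S_C, R_C, P_C, V_C)$ exactly as before, but now with $R_C$ determined by clauses (I)', (II), (III), and (IV). I would then reprove the analogues of Lemma \ref{TSMBreal} (that this is a genuine \textbf{MB}-realization, which requires checking that $R_C(s,t)=1$ only when $s=t$; for the new clause (I)' this follows because any bracket $[\ ]^{=}_{\beta}$ occurring in the tree has $\beta \neq 1$, since $\square^{=}_{1}$ interacts only via the self-rules and never labels a bracket in the saturation), Lemma \ref{incS}, and the central truth Lemma \ref{fuTSMB}. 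The truth lemma is where essentially all the work lies, and I would redo its induction case by case. The propositional and $\square^{o}$ cases carry over from the \textbf{NSMB} proof almost verbatim. The genuinely new cases are $\square^{=}_{\alpha} A \in \GA'$ and $\square^{=}_{\alpha} A \in \DE'$.

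The hard part is the truth lemma for the equality modality. For $\square^{=}_{\alpha} A \in \GA'$ I would take any node $\GA'' \Rightarrow \DE''$ with $R_C((\GA' \Rightarrow \DE'),(\GA'' \Rightarrow \DE'')) = \beta$ and $\alpha = \beta$, and argue that condition (1) of \textbf{NSMB+} is met so that step \ref{=L} or \ref{=Lsym} has already placed $A \in \GA''$; the delicate point is that $R_C$ can realize the value $\alpha$ through clause (I)' (a genuine $^{=}_{\alpha}$-bracket, handled directly), through clause (III) with $\alpha = 1$ (handled by step \ref{=Lself} under condition (2)), or through clause (II) as $Suc(\beta')$, and I must verify that the last possibility cannot produce $\beta = \alpha$ for $\alpha \in (\GA_C \Rightarrow \DE_C, \mathcal{T}_C)_J$ — this is guaranteed precisely by the interpolation property of $\mathbb{U}_C$, which inserts strictly intermediate fresh values so that $Suc(\beta')$ never coincides with an element of $(\GA_C \Rightarrow \DE_C, \mathcal{T}_C)_J$. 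For $\square^{=}_{\alpha} A \in \DE'$ I would split on $\alpha = 1$ (use step \ref{=Rself}, yielding $A \in \DE'$ and $R_C(s,s)=1$ via clause (III)) and $\alpha \neq 1$ (use the box-right step, which creates a child node via a $^{=}_{\alpha}$-bracket, so clause (I)' gives a node at exact distance $\alpha$ with $A$ on the right), then apply the induction hypothesis to falsify $\square^{=}_{\alpha} A$ at $\GA' \Rightarrow \DE'$.

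Finally, exactly as in Lemma \ref{falseC}, combining the truth lemma with the inclusion Lemma \ref{incS} shows $\GA \Rightarrow \DE, \mathcal{T}$ is false in $(S_C, R_C, P_C, V_C)$ under the canonical embedding $\mathcal{E}_C$, which establishes completeness by contraposition. I expect the main obstacle to be bookkeeping in the $\square^{=}_{\alpha}$ cases of the truth lemma: one must track carefully how each value of $R_C$ can arise and confirm that the interpolation of $\mathbb{U}_C$ isolates the exact-value relations from the successor-shifted ones introduced for $\square^{o}$, so that the equality modality reads off precisely the $^{=}_{\alpha}$-labelled edges and nothing else.
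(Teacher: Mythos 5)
Your proposal is correct and follows essentially the same route as the paper: the contrapositive via the modified saturation steps, the canonical model with clause (I)$'$, and a truth-lemma adaptation whose key points are exactly the paper's (that condition (1) propagates $A$ through $^{=}_{\alpha}$-brackets, and that the interpolated set $\mathbb{U}_C$ prevents clause (II) from ever producing a value in $(\GA_C \Rightarrow \DE_C, \mathcal{T}_C)_J$, so it never collides with an exact-value modality). If anything, you are slightly more explicit than the paper, e.g.\ in checking the realization condition $R_C(s,t)=1 \Leftrightarrow s=t$ for (I)$'$ and the $\alpha=1$ subcase of $\square^{=}_{\alpha}A \in \GA'$ via clause (III).
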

\begin{proof}

We change some parts of the proof of Lemma \ref{fuTSMB} as follows:

\begin{itemize}

\item Suppose $\square^{=}_{\alpha} A \in \GA'$ and $\alpha \neq 0$.

Suppose $R_C((\GA' \Rightarrow \DE'), (\GA'' \Rightarrow \DE'')) = \beta$, and $\alpha = \beta$.
If the reason for $\beta$ is (I)', from Step \ref{=L} or \ref{=Lsym}, $A \in \GA''$.
From the nature of $\mathbb{U}$ and $\alpha = \beta$, there is no case where (II) is the reason for $\beta$.


\item Suppose $\square^{o}_{\alpha} A \in \GA'$.

Suppose $\alpha \neq 1$, $R_C((\GA' \Rightarrow \DE'), (\GA'' \Rightarrow \DE'')) = \beta$, $\alpha < \beta$.

If the reason for $\beta$ is (I)',
from Step \ref{=L} or \ref{=Lsym}, $A \in \GA''$.
If the reason for $\beta$ is (II), suppose $\beta = Suc(\beta')$.
From $\alpha, \beta' \in (\GA_C \Rightarrow \DE_C, \mathcal{T}_C)_J$, $\alpha < Suc(\beta')$, and the definitions of $\mathbb{U}_C$, $\alpha \leqq \beta'$.
From $\alpha < \beta'$ and Step \ref{=L} or \ref{=Lsym}, $A \in \GA''$.
If the reason for $\beta$ is (III), $\beta = 1$.
From Step \ref{boxLself}, $A \in \GA''$. 



\item Suppose $\square^{=}_{\alpha} A \in \DE'$.

If $\alpha = 1$, from Step \ref{boxRself}, $A \in \DE'$.
From the inductive hypothesis, $(\GA' \Rightarrow \DE') \not\models A$.
If $\alpha \neq 1$, from Step \ref{boxR} and the definition of $R_C$, there exists $(\GA'' \Rightarrow \DE'') \in S_C$ such that $R_C((\GA' \Rightarrow \DE'), (\GA'' \Rightarrow \DE'')) = \alpha$ and $A \in \DE''$.
\end{itemize}  \end{proof}

The following theorems can also be proved in the same way as the {\bf NSMB} case.

\begin{theorem}[Cut-elimination theorem for {\bf NSMB+}]\label{cutTSMB+} If $\GA \Rightarrow \DE, \mathcal{T}$ is provable in {\bf NSMB+}, there exists a proof of $\GA \Rightarrow \DE, \mathcal{T}$ that does note include the rule (cut).
\end{theorem}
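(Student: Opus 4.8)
The final statement to prove is the Cut-elimination theorem for \textbf{NSMB+} (Theorem \ref{cutTSMB+}). The plan is to mirror exactly the strategy used for the \textbf{NSMB} case in Theorem \ref{cutTSMB}, namely to obtain cut-freeness as a free corollary of a cut-free completeness proof rather than through a direct syntactic cut-reduction argument. This is the clean semantic route: if we can establish that every valid nested-sequent is provable using only the cut-free fragment of the calculus, then cut is admissible, since soundness (Theorem \ref{soTSMB+}) already guarantees that anything provable with cut is valid, hence also provable without it.

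First I would observe that the completeness proof for \textbf{NSMB+} (Theorem \ref{compTSMB+}) never invokes the rule (cut). Concretely, the canonical-model construction proceeds by the saturation procedure whose steps are the modified Steps \ref{=L}, \ref{=Lsym}, \ref{=Lself}, \ref{=Rself} together with the unchanged steps inherited from the \textbf{NSMB} construction, and each such step preserves unprovability by appeal to a specific \emph{non-cut} rule (($\square$L), ($\square$L sym), ($\square$L self), (=R self), and so on). Thus the entire argument that an unprovable $\GA \Rightarrow \DE, \mathcal{T}$ yields a falsifying \textbf{MB+}-realization goes through verbatim when ``provable'' is read as ``provable without (cut).'' I would make this explicit: the contrapositive shows that if $\GA \Rightarrow \DE, \mathcal{T}$ is \emph{not} cut-free provable, then it is not valid.

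The core of the proof is then a short chain of implications. Suppose $\GA \Rightarrow \DE, \mathcal{T}$ is provable in \textbf{NSMB+} (possibly using (cut)). By the soundness theorem (Theorem \ref{soTSMB+}), $\GA \Rightarrow \DE, \mathcal{T}$ is valid. By the cut-free completeness argument just described, validity yields a proof that does not use (cut). Therefore every provable nested-sequent has a cut-free proof, which is exactly the claim. I would phrase the body of the proof in one or two sentences, since, as the excerpt itself indicates (``The following theorems can also be proved in the same way as the {\bf NSMB} case''), the argument is identical in structure to Theorem \ref{cutTSMB}.

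The only genuine obstacle is making sure the completeness construction for \textbf{NSMB+} really is cut-free, i.e. that none of the modified steps secretly relies on cut to preserve unprovability. The delicate points are the handling of the $=$-modality in the saturation steps and the treatment of the $0$-relation, which the excerpt flags as problematic for $\square^{=}_{0}$; I would confirm that the restriction to $\square^{=}_{\alpha}$ with $\alpha > 0$ (built into the very definition of \textbf{MB+}'s formulas) keeps every saturation step matched to a cut-free rule. Once that is verified — and the excerpt's modified completeness proof already does this by citing only ($\square$L)-type and (=R self) rules — the cut-elimination theorem follows immediately as a corollary, with no syntactic permutation or rank-reduction argument required.
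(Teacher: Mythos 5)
Your proposal is correct and follows exactly the paper's own route: the paper proves Theorem \ref{cutTSMB+} by noting it goes "in the same way as the \textbf{NSMB} case," i.e., Theorem \ref{cutTSMB}, where cut-elimination is obtained semantically because the completeness proof (via the saturation procedure and canonical model) never uses (cut), so soundness plus cut-free completeness yields a cut-free proof of any provable nested-sequent. Your additional check that each modified saturation step for \textbf{NSMB+} cites only a non-cut rule is precisely the verification the paper leaves implicit.
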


\begin{theorem}[Finite model property for {\bf MB+}]\label{finiteMB+} If $A$ is not a valid formula of {\bf MB+}, there exists an {\bf MB}-realization $(S,R,P,V)$ such that $S$ is a finite set and $A$ is not valid in it.
\end{theorem}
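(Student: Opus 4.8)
The plan is to mirror the structure of the finite model property proof for {\bf MB} (Theorem~\ref{finiteMB}), now invoking the completeness and canonical-model machinery for {\bf NSMB+} instead. Concretely, if $A$ is not a valid formula of {\bf MB+}, then by the contrapositive of the completeness theorem for {\bf NSMB+} (Theorem~\ref{compTSMB+}), the nested-sequent $\Rightarrow A$ is not provable in {\bf NSMB+}. First I would apply the iterative saturation procedure for {\bf NSMB+} to $\Rightarrow A$, using the modified steps~\ref{=L}, \ref{=Lsym}, \ref{=Lself}, and \ref{=Rself} in place of their {\bf MB} counterparts, obtaining an unprovable saturated nested-sequent $\GA_C \Rightarrow \DE_C, \mathcal{T}_C$.

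Next I would verify that this procedure still terminates in finitely many steps, by exactly the three observations used in the {\bf MB} case: the number of nodes and formulas stays finite, every rule application reduces formula complexity, and only the ($\square$R)-style step (step~\ref{boxR}) introduces new nodes and does so at most once per occurrence of a box-formula appearing in $\DE'$, all of whose instances are subformulas of $A$. Since the language of {\bf MB+} merely trades the symbols $\square^{c}_{\alpha}$ ($\alpha \neq 0$) for $\square^{=}_{\alpha}$ without changing the inductive shape of formulas, the termination argument carries over unchanged, so $\GA_C \Rightarrow \DE_C, \mathcal{T}_C$ has finitely many nodes.

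I would then construct the canonical model $(S_C, R_C, P_C, V_C)$ for {\bf MB+} exactly as before, with $S_C$ the finite set of nodes of $\GA_C \Rightarrow \DE_C, \mathcal{T}_C$, and $R_C$ defined by the same cases augmented with the new clause (I)' for the $^{=}_{\beta}$-brackets. By the analogue of Lemma~\ref{TSMBreal}, this is an {\bf MB}-realization with $S_C$ finite; by the modified truth lemma (the adapted Lemma~\ref{fuTSMB} proved inside Theorem~\ref{compTSMB+}) together with the analogue of Lemma~\ref{incS}, the nested-sequent $\Rightarrow A$ is false in $(S_C, R_C, P_C, V_C)$ under the embedding $\mathcal{E}_C$. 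Since $\tau(\Rightarrow A) = A$ up to the interpretation, this witnesses that $A$ is not valid in a finite {\bf MB}-realization, which is the desired conclusion.

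The main obstacle, and the only genuinely new point, is confirming that the new $^{=}_{\alpha}$-modality does not break finiteness or well-definedness of $R_C$. In particular one must check that clause (I)' assigns a well-defined single value $\beta$ to each $^{=}_{\beta}$-edge consistently with the other clauses, so that $R_C$ remains a function and the reflexivity/symmetry conditions of an EQL-frame still hold; this is precisely why the paper excludes $\square^{=}_{0}$ (the problematic case where the independently-defined clause~(IV) would misrepresent the intended $0$-relation). Provided the saturation and canonical-model steps for {\bf NSMB+} are applied only to the admitted symbols $\square^{=}_{\alpha}$ with $0 < \alpha \leqq 1$, $\square^{o}_{\alpha}$, and $\square^{c}_{0}$, every lemma from the {\bf MB} development transfers verbatim, and the finiteness of $S_C$ is immediate from termination of the procedure.
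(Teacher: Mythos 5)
Your overall route is exactly the paper's: the paper disposes of this theorem by remarking that it is ``proved in the same way as the {\bf NSMB} case,'' i.e., by the saturation procedure applied to $\Rightarrow A$ and the resulting finite canonical model, with the clause (I)' handling the $^{=}_{\beta}$-brackets --- precisely the argument you lay out, including the correct observation that only the ($\square$R)-type step creates nodes and the correct remark about why $\square^{=}_{0}$ is excluded.

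There is, however, one genuine error at the very first step. You derive ``$\Rightarrow A$ is not provable in {\bf NSMB+}'' from ``$A$ is not valid'' by invoking \emph{the contrapositive of the completeness theorem} (Theorem \ref{compTSMB+}). Completeness states: valid implies provable; its contrapositive is: not provable implies not valid. That is the \emph{converse} of the implication you need, and it cannot yield unprovability of $\Rightarrow A$. The implication you actually need --- not valid implies not provable --- is the contrapositive of the \emph{soundness} theorem (Theorem \ref{soTSMB+}): if $\Rightarrow A$ were provable in {\bf NSMB+}, it would be valid as a nested-sequent, hence the formula $A$ would be valid (validity of the nested-sequent $\Rightarrow A$ amounts to $A$ being true at every world of every {\bf MB}-realization), contradicting the hypothesis. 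This is not a cosmetic mislabeling: as written, the cited theorem simply does not support the step, and without unprovability of $\Rightarrow A$ the saturation/canonical-model machinery never gets off the ground. With the citation corrected to soundness, the remainder of your argument --- termination of the modified steps \ref{=L}, \ref{=Lsym}, \ref{=Lself}, \ref{=Rself}, finiteness of $S_C$, well-definedness of $R_C$ with (I)', and the adapted truth lemma from the proof of Theorem \ref{compTSMB+} --- is correct and coincides with the paper's intended proof.
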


\begin{theorem}\label{dTSMB+} 
The validity problem for {\bf MB+} is decidable.
\end{theorem}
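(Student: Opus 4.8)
The plan is to obtain this as a corollary of the finite model property for {\bf MB+} (Theorem \ref{finiteMB+}), exactly as Theorem \ref{dTSMB} is derived from Theorem \ref{finiteMB}. The only thing beyond the bare finite model property that decidability requires is an \emph{effective} bound on the size of the countermodels one has to inspect. So the heart of the plan is to read off such a bound from the canonical-model construction, adapted to {\bf MB+} through Steps \ref{=L}--\ref{=Rself} and clause (I)', and then to argue that validity in each bounded realization is itself decidable.

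First I would make the size bound explicit. The nodes of $(S_C, R_C, P_C, V_C)$ are precisely the nodes of $\GA_C \Rightarrow \DE_C, \mathcal{T}_C$, and the termination argument preceding Theorem \ref{finiteMB} shows that this nested-sequent arises from $\Rightarrow A$ by finitely many steps, each of which either lowers formula complexity or, in the single case of Step \ref{boxR}, adds at most one node per occurrence of a modal subformula of $A$. Hence the number of nodes is bounded by a computable function of $A$. Only subformulas of $A$ ever occur in any node, so $V_C$, and therefore $P_C$, ranges over a finite collection of definable subsets. The accessibility values are drawn from the interpolated set $\mathbb{U}_C$ of $(\GA_C \Rightarrow \DE_C, \mathcal{T}_C)_J$, which is finite and computable from the numbers appearing in $A$.

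Next I would invoke the remark, already used for {\bf MB}, that $J$ and $\mathbb{U}$ may be replaced by a suitable finite totally ordered set, so that all the comparisons $(\alpha, d) \preceq (\beta, d')$ appearing in the rules and in condition (1) of {\bf NSMB+} are decidable. It follows that for any fixed finite {\bf MB}-realization whose parameters respect the computed bound, testing whether $A$ is valid in it reduces to finitely many evaluations of $V$, each terminating. The decision procedure is then to enumerate all realizations up to the bound and test $A$ in each: by definition $A$ is falsified by some such realization whenever it is falsified at all, while by Theorem \ref{finiteMB+} an invalid $A$ is guaranteed to be falsified by one of them. Thus $A$ is valid iff no bounded realization falsifies it, and this can be checked effectively.

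The main obstacle I anticipate is confirming that the new $=$-modality does not disturb the effective bound. Unlike the {\bf MB} case, clause (I)' forces an exact accessibility value for an $=$-bracket, and Steps \ref{=L}/\ref{=Lsym} fire only under the stricter condition (1) of {\bf NSMB+}; one must check that these restrictions never introduce accessibility values outside $\mathbb{U}_C$ and never cause the construction to loop. Since every $=$-bracket is produced by a single application of Step \ref{boxR} for an occurrence of $\square^{=}_{\alpha}A$ with $\alpha \in J$, no value outside $\mathbb{U}_C$ can arise, and the termination and node-counting arguments carry over verbatim (the excluded modality $\square^{=}_{0}$, which is precisely what breaks the canonical construction, does not occur in {\bf MB+}). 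With the bound established and all order comparisons decidable, decidability follows.
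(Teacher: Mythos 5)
Your proposal is correct and follows essentially the same route as the paper: the paper proves Theorem \ref{dTSMB+} exactly as Theorem \ref{dTSMB}, namely as a corollary of the finite model property (Theorem \ref{finiteMB+}) obtained from the terminating canonical-model construction, together with the remark that $J$ and $\mathbb{U}$ may be replaced by a finite totally ordered set so that all comparisons are effective. Your write-up simply makes explicit the computable size bound and the enumeration of bounded realizations that the paper leaves implicit, including the observation that the $=$-brackets only ever carry values from $\mathbb{U}_C$ and that $\square^{=}_{0}$ is excluded.
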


The definition of interpretation $\tau$ is the same as for {\bf MB} (except that $d$ could be $=$). 

\begin{theorem}

$\GA \Rightarrow \DE, \mathcal{T}$ is valid iff $\tau(\GA \Rightarrow \DE, \mathcal{T})$ is valid.

\end{theorem}

\section*{Acknowledgements}

\noi
This work was supported by JSPS KAKENHI Grant Number 20K19740.

\nocite{*}
\bibliographystyle{eptcs}
\bibliography{generic}

\end{document}